\journal{Mathematical Biosciences}
\def\E{{\text{E}}}
\newtheorem{theorem}{Theorem}[section]
\newtheorem{proposition}{Proposition}[section]
\newtheorem{lemma}[theorem]{Lemma}
\newtheorem{corollary}{Corollary}[theorem]
\begin{document}

\begin{frontmatter}

%% Title, authors and addresses

%% use the tnoteref command within \title for footnotes;
%% use the tnotetext command for theassociated footnote;
%% use the fnref command within \author or \address for footnotes;
%% use the fntext command for theassociated footnote;
%% use the corref command within \author for corresponding author footnotes;
%% use the cortext command for theassociated footnote;
%% use the ead command for the email address,
%% and the form \ead[url] for the home page:
%% \title{Title\tnoteref{label1}}
%% \tnotetext[label1]{}
%% \author{Name\corref{cor1}\fnref{label2}}
%% \ead{email address}
%% \ead[url]{home page}
%% \fntext[label2]{}
%% \cortext[cor1]{}
%% \affiliation{organization={},
%%             addressline={},
%%             city={},
%%             postcode={},
%%             state={},
%%             country={}}
%% \fntext[label3]{}

\title{Impact of heterogeneity on infection probability: Insights from single-hit dose-response models}

%% use optional labels to link authors explicitly to addresses:
%% \author[label1,label2]{}
%% \affiliation[label1]{organization={},
%%             addressline={},
%%             city={},
%%             postcode={},
%%             state={},
%%             country={}}
%%
%% \affiliation[label2]{organization={},
%%             addressline={},
%%             city={},
%%             postcode={},
%%             state={},
%%             country={}}

\author[inst1]{Francisco J. P\'erez-Reche}
\ead{fperez-reche@abdn.ac.uk}
\affiliation[inst1]{organization={School For Natural and Computing Sciences, SUPA},%Department and Organization
            addressline={University of Aberdeen}, 
            city={Aberdeen},
            postcode={AB24 3UE}, 
            %state={Scotland},
            country={United Kingdom}}

%\author[inst2]{Author Two}
%\author[inst1,inst2]{Author Three}
%
%\affiliation[inst2]{organization={Department Two},%Department and Organization
%            addressline={Address Two}, 
%            city={City Two},
%            postcode={22222}, 
%            state={State Two},
%            country={Country Two}}

\begin{abstract}
The process of infection of a host is complex, influenced by factors such as microbial variation within and between hosts as well as differences in dose across hosts. This study uses dose-response and within-host microbial infection models to delve into the impact of these factors on infection probability. It is rigorously demonstrated that within-host heterogeneity in microbial infectivity enhances the probability of infection. The effect of infectivity and dose variation between hosts is studied in terms of the expected value of the probability of infection. General analytical findings, derived under the assumption of small infectivity, reveal that both types of heterogeneity reduce the expected infection probability. Interestingly, this trend appears consistent across specific dose-response models, suggesting a limited role for the small infectivity condition. Additionally, the vital dynamics behind heterogeneous infectivity are investigated with a within-host microbial growth model which enhances the biological significance of single-hit dose-response models. Testing these mathematical predictions inspire new and challenging laboratory experiments that could deepen our understanding of infections. 
\end{abstract}

%%Graphical abstract
%\begin{graphicalabstract}
%\includegraphics{grabs}
%\end{graphicalabstract}

%%Research highlights
%\begin{highlights}
%\item Insights into how heterogeneity influences infection probability in single-hit dose-response models.
%\item Heterogeneous infectivity enhances the probability of infection of a host.
%\item Variability in infectivity between hosts decreases the expected value of the probability of infection.
%\item Variability in the dose reduces the expected value of the probability of infection.
%\item Heterogeneous infectivity interpreted in terms of heterogeneous microbial vital dynamics.
%\end{highlights}

\begin{keyword}
%% keywords here, in the form: keyword \sep keyword
Probability of infection \sep Heterogeneity \sep Dose-response models \sep Single-Hit Model \sep Within-host microbial infection dynamics
%% PACS codes here, in the form: \PACS code \sep code
%\PACS 0000 \sep 1111
%% MSC codes here, in the form: \MSC code \sep code
%% or \MSC[2008] code \sep code (2000 is the default)
%\MSC 0000 \sep 1111
\end{keyword}

\end{frontmatter}

%% \linenumbers

%% main text
\section{Introduction}

A host exposed to a pathogenic microbe can become infected. However, infection is a complex process that depends on numerous factors, including the number of microbes to which the host has been exposed (i.e. the dose), the virulence and genetic variability of the pathogens, as well as the susceptibility of the host to such pathogens~\cite{Milgroom_2023_BookInfections,Gog_Epidemics2015_SevenChallengesWithinHost,
Mideo_Alizon_TREE2008_WithinBetweenHost,
Didelot_NatRevMicrobio2016_WithinHostEvolBacterialPathogens,
Restif_PhilTransRSB2015_WithinHostIntroduction,Metcalf-Grenfell_PLOSPathogens2020_WithinHostSARSCoV-2,Lythgoe-Hall_Science2024_WithinHostDiversitySARSCoV2}. Given the typical uncertainty about all these factors, it is useful to describe the occurrence of infection as a random event that can occur with certain probability. 

Estimating the probability of infection following exposure to a pathogenic microbe is a crucial step in quantitative microbial risk assessment~\cite{Haas_Book2014}. Heterogeneity associated with the host-microbe interaction within hosts and differences between hosts in terms of the ingested dose and susceptibility, are expected to influence the chances of infection. However, a systematic understanding of the role of heterogeneity on infection is lacking. Interpreting infection as a biological invasion could guide intuition on the role of heterogeneity in the probability of infection. Different types of heterogeneity have been considered by ecological and epidemiological models for the probability of biological invasions~\cite{Melbourne_EcolLett2007_HeterogeneousInvasion,
Su_EcolRes2009_LandscapeHeterogeneityHostParasite,
Miller_PRE2007,PerezReche_JRSInterface2010,
Neri_PLoSCBio2011}. Such models, however, suggest that depending on the type of heterogeneity, the probability of infection could either be enhanced of reduced. 

Here, a general and mathematically rigorous analysis of the role of several types of heterogeneities on the probability of infection is presented within the context of dose-response models. These are mathematical functions that map a dose to the probability of infection~\cite{Haas_Book2014}.  Dose-response models have been effectively utilised in previous works to characterise the probability of infection across diverse microbial species and host populations \cite{Gifford_JTheorBiol1969_DoseResponse,Furumoto1967a,Furumoto1967,
Haas_AmJEpidemiology1983,Teunis_RiskAnal2000,
Strachan_2005_DoseResponseO157,Chen_RiskAnal2006,teunis_hierarchical_2008,Teunis-Strachan_2010_DoseResponseSalmonella,Watanabe_RiskAnalysis2010_DoseResponseSARSCoV2,
Conlan_JRoySocInterface2011,Hamilton_WaterResearch2017_DoseResponse,Xu_JTheorBiol2023_WithinHostDoseResponseSARSCoV2}. 

Among several possible mechanistic approaches proposed to formulate dose-response relationshipts, the single-hit framework is the most widely used  \cite{Haas_AmJEpidemiology1983,Teunis_RiskAnal2000,Haas_Book2014}. It is based on the following two assumptions:
\begin{description}
 \item[Assumption 1.] Ingestion of a single pathogenic microbe is sufficient to infect a host. 
 \item[Assumption 2.] If a host ingests a dose containing multiple pathogenic microbes, the event of one microbe causing infection is independent of the event of another microbe causing infection. 
\end{description}

A central quantity of single-hit models is the infectivity, which denotes the probability that any microbe within an ingested dose will infect the host. The necessity to account for heterogeneity in infectivity and dose has long been recognised by the scientific community employing dose-response models~\cite{Moran_JHygiene1954_DoseResponse,Furumoto1967a,Furumoto1967,
Gifford_JTheorBiol1969_DoseResponse}. Three sources of heterogeneity can be naturally considered: 
\begin{description}
\item[(I) Within-host heterogeneous infectivity.] If a host ingests several pathogenic microbes, the infectivity may be different for different microbes. 
\item[(II) Between-host heterogeneous infectivity.] Two identical microbes may have different infectivity in different hosts. 
\item[(III) Between-host heterogeneous dose.] The ingested dose may be different for different hosts.     
\end{description}

The systematic investigation of whether each type of heterogeneity enhances or diminishes the probability of infection has not been systematically undertaken in previous studies. Furthermore, a link between heterogeneous infectivity and microbial infection mechanisms (e.g., microbial rates of birth and death) has yet to be established \cite{Haas_EnvSciTech2014_DoseResponseReview}. Advancing in these two directions constitutes the central motivation of the present work.

To address the first point, two possible definitions of the probability of infection will be considered to address the first point: The probability that a given host becomes infected after ingesting a dose with specific infectivities and the average probability of infection for a set of hosts. Previous works have predominantly focused on the latter definition, as it is well-suited for interpreting experimental data.  In such cases, variability has been addressed by treating infectivity and/or dose as random variables. This has led to dose-response formulas that map the expected value of the dose to the expected value of the probability of infection, as derived from infectivity and dose probability distributions. Despite the widespread use of this modeling approach, a comprehensive analysis of how the probability of infection varies with infectivity dispersion remains lacking. 

For instance, the exact beta-Poisson model or its widely-used approximation, the ``approximate'' beta-Poisson formula, assume Poisson and beta distributions for the dose and infectivity, respectively~\cite{Furumoto1967,Teunis_RiskAnal2000}. The shape parameters of the infectivity distribution,  usually denoted as $\alpha$ and $\beta$, encode the dependence of the probability of infection on the variance of infectivity. However, the link between these parameters and the variance of infectivity has not been exploited to understand the effect of such variance on the expected probability of infection. The approximate beta-Poisson formula has also been used to mathematically justify an ubiquitous flattening of the dose-response curve that cannot be explained assuming homogeneous infectivity~\cite{Haas_Book2014}. However, an explicit link between the slope and the variance of infectivity was again not established. Indeed, the demonstration was based on the so-called median effective dose and parameter $\alpha$ in such a way that both the mean and variance of the infectivity vary simultaneously. A complete proof of whether the decrease of the slope is directly related to the infectivity variance is still lacking.

Concerning the second direction mentioned above, the growth dynamics of microbes within a host have been described with birth-death models that have been related to dose-response models~\cite{Armitage_Nature1965_BirthDeathMicrobialInfection,
Williams_JRStatSoc1965_BirthDeathMicrobialInfections}. However, heterogeneous infectivity was not incorporated into these models~\cite{Haas_EnvSciTech2014_DoseResponseReview}, and for instance, a mechanistic interpretation of the parameters $\alpha$ and $\beta$ of the beta-Poisson model has not been provided.

Overall, the current understanding of how variability in dose and/or infectivity affects the probability of infection is limited, even for relatively simple models like the approximate beta-Poisson formula. This paper addresses this gap by presenting general results applicable to any single-hit dose-response model with heterogeneous dose and/or infectivity. These findings are then demonstrated using specific dose-response models.

The remaining of the paper is structured as follows: Section~\ref{sec:General-SingleHit} introduces a general single-hit response model and provides general results for the probability of a host becoming infected by a dose of microbes with heterogeneous infectivity (i.e., heterogeneity of type I).  A particular case of this model with homogeneous dose and infectivity is presented in Sec.~\ref{sec:modelA}. Subsequently, Section~\ref{sec:ModelsExpectedPinf} gradually extends the homogeneous model to analyse the effect of all types of heterogeneity on the expected value of the probability of infection (see a schematic representation in Fig.~\ref{fig_Models_space_3D}). Well-known dose-response models like the exponential and beta-Poisson formulas \cite{Haas_Book2014,Haas_EnvSciTech2014_DoseResponseReview} are special cases of such models. Sec.~\ref{sec:GrowthModel} presents a population growth model for microbial infection, considering heterogeneous birth and death rates. Sec.~\ref{sec:conclusion} finalises the paper with a summary of the main findings, followed by a discussion on the scientific significance of experimentally validating the mathematical predictions in laboratory settings, along with the potential challenges involved.

%This provides a biological interpretation for dose-response models with beta-distributed infectivity. 

\begin{figure}%[h]
\centering
\includegraphics[width=11 cm]{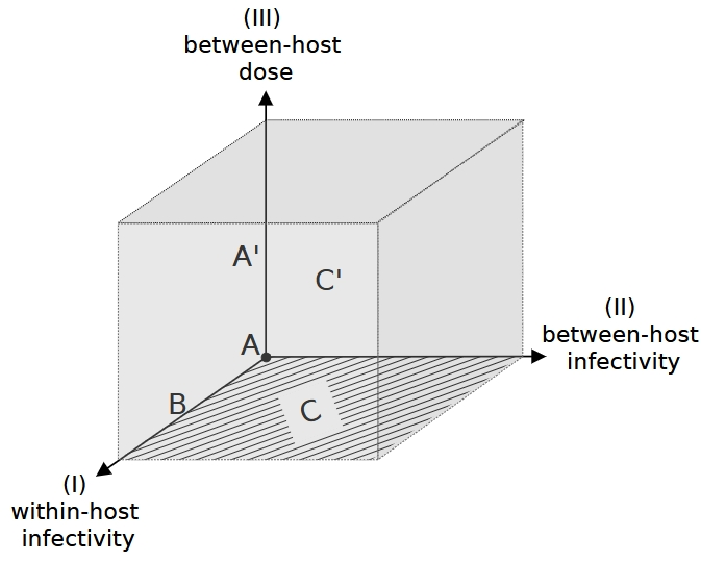}
\caption{Single-hit models for the expected probability of infection. Each axis represents a different source of heterogeneity: (I) Randomly distributed infectivity within hosts, (II) randomly distributed infectivity for different hosts and (III) randomly distributed dose for different hosts. Model $A$ is located at the origin where both infectivity and dose are homogeneous. Models of type $A^\prime$ are located along the axis corresponding to heterogeneous doses. Models of type $B$ are located along the axis corresponding to random infectivity within hosts and homogeneous between-hosts infectivity and dose. Models of type $C$ are located on the horizontal plane with homogeneous dose but heterogeneous infectivity within and between hosts. Models of type $C^\prime$ incorporate all three types of heterogeneity and are located in the bulk of the model space.}
\label{fig_Models_space_3D}
\end{figure}

%\cite{Haas_AmJEpidemiology1983,Teunis_RiskAnal2000,
%Strachan_2005_DoseResponseO157,Chen-Teunis_2006_DoseResponseCampy,teunis_hierarchical_2008,Teunis-Strachan_2010_DoseResponseSalmonella,Watanabe_RiskAnalysis2010_DoseResponseSARSCoV2,
%Conlan_JRoySocInterface2011,Xu_JTheorBiol2023_WithinHostDoseResponseSARSCoV2}
%Furumoto1967a (experiments),Furumoto1967 (theory)
%Pohlker_RMP2023_DoseResponseDroplets
%Teske - droplets
%Haas_RiskAnalysis2002
%Haas_EnvSciTech2014_DoseResponseReview
%Moran_JHygiene1954_DoseResponse
%Gifford_JTheorBiol1969_DoseResponse
%Gog_Epidemics2015_SevenChallengesWithinHost

%Teunis-Strachan_2010_DoseResponseSalmonella
%Strachan_2005_DoseResponseO157

%- Heterogeneity effect on invasions
%PerezReche_JRSInterface2010
%Neri_JRoySocInterface2011
%Melbourne_EcolLett2007_HeterogeneousInvasion
%Melbourne_Science2009

% SARS-CoV-2
%Xu_JTheorBiol2023_WithinHostDoseResponseSARSCoV2  - SARS-CoV-2
% SARS
%Watanabe_RiskAnalysis2010_DoseResponseSARSCoV2  - Exponential model is good for these data on transgenic mice colonisation

% papers citing Haas 2014 - https://scholar.google.com/scholar?cites=11929143334084968874&as_sdt=2005&sciodt=0,5&hl=en

\section{General single-hit dose-response model}
\label{sec:General-SingleHit}
Let us assume a population of $H$ hosts such that each host ingest (or is in contact with) $n_h$ pathogenic microbes. The infectivity of a microbe $i$ is quantified by the probability $x_{i,h}$ that the microbe infects the host. From assumption 1 in the introduction, any microbe with $x_{i,h}>0$ can cause infection. Following assumption 2, the probability that a host is infected is
\begin{equation}
\label{eq:Ph_general}
    P_h\left(\{x_{i,h}\}_{i=1}^{n_h},n_h\right)=1-\prod_{i=1}^{n_h} (1-x_{i,h})~.
\end{equation}
Assuming that the infection of any host is independent of the rest of the hosts, the number of infected hosts obeys a Poisson-Binomial distribution, $I\sim \text{PoissonBin}(H,\{P_h\}_{h=1}^H)$ \cite{Tang-Tang_StatScience2023_PoissonBinomial}. The mean and variance of the number of infected individuals are then given by 

\begin{equation}
\label{eq:E-Var_general}
    \text{E}(I) = \sum_{h=1}^H P_h~,\\
    \text{Var}(I) = \sum_{h=1}^H P_h(1-P_h)~.
\end{equation}

Within this model, type I heterogeneity is captured by the dependence of $x_{i,h}$ on the microbe $i$. Heterogeneity of type II is encoded by the dependence of $x_{i,h}$ on the host $h$. Heterogeneity of type III is accounted for by the dependence of $n_h$ on the host.

After deriving a useful formula for the probability of infection of a host in Lemma~\ref{lemma:Formula_Ph}, a general result is proven in Theorem~\ref{th:Within-Host-Inf} for the effect of within-host heterogeneous infectivity on the probability of infection.

\begin{lemma}
\label{lemma:Formula_Ph}
    The probability of infection of a host that ingests a dose of $n$ microbes with infectivities $\{x_{i}\}_{i=1}^n$ can be expressed as 
    \begin{equation}
    \label{eq:Ph_general_theorem}
        P_h\left(\{x_{i}\}_{i=1}^{n},n\right) = 1-(1-\bar{x})^n \exp{\left(-n \sum_{k=1}^\infty \frac{M_k}{k (1-\bar{x})^k} \right)}~,
    \end{equation}
    in terms of the sample mean,  $\bar{x}=n^{-1} \sum_{i=1}^n x_i$,  and central moments, $M_k = n^{-1} \sum_{i=1}^n (x_i-\bar{x})^k$, of the infectivity. Here, the notation has been simplified by dropping the explicit dependence of infectivities and dose on the host $h$.
\end{lemma}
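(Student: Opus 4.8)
The plan is to work directly with the finite product $\prod_{i=1}^n (1-x_i)$ appearing in Eq.~\eqref{eq:Ph_general}, since the target identity~\eqref{eq:Ph_general_theorem} is equivalent to establishing
\[
\prod_{i=1}^n (1-x_i) = (1-\bar{x})^n \exp\left(-n \sum_{k=1}^\infty \frac{M_k}{k(1-\bar{x})^k}\right)~.
\]
First I would take logarithms, reducing the claim to an additive identity for $\sum_{i=1}^n \ln(1-x_i)$. The key algebraic step is to centre each factor about the sample mean by writing $1-x_i = (1-\bar{x})\bigl(1 - \tfrac{x_i-\bar{x}}{1-\bar{x}}\bigr)$, so that $\ln(1-x_i) = \ln(1-\bar{x}) + \ln\bigl(1 - \tfrac{x_i-\bar{x}}{1-\bar{x}}\bigr)$.

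Next I would expand the second logarithm with the Mercator series $\ln(1-u) = -\sum_{k=1}^\infty u^k/k$ evaluated at $u = (x_i-\bar{x})/(1-\bar{x})$. Summing over $i$ and interchanging the two summations gives
\[
\sum_{i=1}^n \ln(1-x_i) = n\ln(1-\bar{x}) - \sum_{k=1}^\infty \frac{1}{k(1-\bar{x})^k} \sum_{i=1}^n (x_i-\bar{x})^k~.
\]
Recognising the inner sum as $n M_k$ by the definition of the central moments and then exponentiating yields the stated formula at once. As a consistency remark, the $k=1$ term contributes nothing because $M_1 = n^{-1}\sum_i (x_i-\bar{x}) = 0$, so the series effectively begins at $k=2$.

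The main obstacle is analytic rather than algebraic: the series expansion and the interchange of the $i$- and $k$-summations are legitimate only when $|x_i-\bar{x}| < 1-\bar{x}$ for every $i$, i.e. when the deviations of the infectivities from their mean remain inside the radius of convergence. I would therefore state (or assume) this condition explicitly; it is satisfied, for instance, whenever $\bar{x} < 1/2$ and all $x_i < 1$, and hence holds comfortably in the small-infectivity regime emphasised throughout the paper. Under this condition each $|u_i| < 1$, so the double sum $\sum_{i,k} |u_i|^k/k$ converges absolutely, which is precisely what licenses the reordering of terms and completes the argument.
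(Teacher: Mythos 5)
Your proof follows essentially the same route as the paper's: centre each factor about the sample mean via $1-x_i=(1-\bar{x})\bigl(1-\tfrac{x_i-\bar{x}}{1-\bar{x}}\bigr)$, expand the second logarithm with the Mercator series, interchange the (finite) sum over $i$ with the sum over $k$, and identify the central moments $M_k$. Your explicit convergence caveat ($|x_i-\bar{x}|<1-\bar{x}$ for every $i$) is a worthwhile refinement rather than a different approach---the paper silently assumes the expansion is valid, even though for $\bar{x}>1/2$ a microbe with $x_i$ near $0$ pushes the argument outside the series' interval of convergence---so the argument itself matches the paper's.
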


\begin{proof}
    To derive Eq.~\eqref{eq:Ph_general_theorem}, it is convenient to express Eq.~\eqref{eq:Ph_general} as
    \begin{equation}
    \label{eq:pro:Formula_Ph_proof1}
        P_h\left(\{x_{i}\}_{i=1}^{n},n\right) =1-\left[\exp{\left(\frac{1}{n} \sum_{i=1}^n \ln (1-x_i) \right)}\right]^n~.
    \end{equation}

The identity
\begin{equation}
\ln(1-x_i) = \ln(1-\bar{x})+\ln\left(1-\frac{x_i-\bar{x}}{1-\bar{x}} \right)~
\end{equation}
allows Eq.~\eqref{eq:pro:Formula_Ph_proof1} to be written as
\begin{equation}
    \label{eq:pro:Formula_Ph_proof2}
        P_h\left(\{x_{i}\}_{i=1}^{n},n\right) =1-(1-\bar{x})^n \left[\exp{\left(\frac{1}{n} \sum_{i=1}^n 
        \ln\left(1-\frac{x_i-\bar{x}}{1-\bar{x}} \right)
         \right)}\right]^n~.
    \end{equation}

The argument of the exponential function in Eq.~\eqref{eq:pro:Formula_Ph_proof2} can be expressed as
\begin{equation}
\label{eq:pro:Formula_Ph_proof3}
\frac{1}{n} \sum_{i=1}^n  \ln \left(1-\frac{x_i-\bar{x}}{1-\bar{x}} \right) = - \sum_{k=1}^\infty \frac{1}{n} \sum_{i=1}^n \frac{\left( \frac{x_i-\bar{x}}{1-\bar{x}}\right)^k}{k} = -\sum_{k=1}^{\infty} \frac{M_k}{k(1-\bar{x})^k}~.
\end{equation}
Here, the first equality follows from the series expansion $\ln(1-z_i) = -\sum_{k=1}^\infty z_i^k$, where $z_i=(x_i-\bar{x})/(1-\bar{x})$.

Introducing Eq.~\eqref{eq:pro:Formula_Ph_proof3} into Eq.~\eqref{eq:pro:Formula_Ph_proof2}, leads to Eq.~\eqref{eq:Ph_general_theorem}.
\end{proof}

\begin{theorem}
\label{th:Within-Host-Inf} 
 Assume that a host ingests a dose of $n$ microbes with infectivities $\{x_{i,h}\}_{i=1}^n$ and let $\bar{x} = n^{-1} \sum_i^n x_{i,h}$ be the mean infectivity.  The probability of infection of such host is minimal when the infectivity is homogeneous, i.e., when all microbes have the same infectivity, $x_{i,h} = \bar{x}$.
\end{theorem}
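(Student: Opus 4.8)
The plan is to reduce the claim to a single convexity (Jensen) inequality, using the closed form of Lemma~\ref{lemma:Formula_Ph}. Fix the mean infectivity $\bar{x}$ and the dose $n$. Then the prefactor $(1-\bar{x})^n$ in Eq.~\eqref{eq:Ph_general_theorem} is a constant, so minimising $P_h$ is equivalent to maximising $\exp\bigl(-n\sum_{k\ge 1} M_k/[k(1-\bar{x})^k]\bigr)$, i.e.\ to minimising the series $S=\sum_{k=1}^\infty M_k/[k(1-\bar{x})^k]$ over all infectivity configurations with the prescribed mean. Since every central moment vanishes when $x_i=\bar{x}$, the homogeneous configuration gives $S=0$, so the whole theorem amounts to showing that $S\ge 0$ with equality precisely in that case.

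The key observation is that $S$ should not be analysed term by term (odd moments can be negative); instead I would read off directly from Eq.~\eqref{eq:pro:Formula_Ph_proof3} that $S$ equals minus the sample average of $\ln\bigl(1-(x_i-\bar{x})/(1-\bar{x})\bigr)=\ln\bigl((1-x_i)/(1-\bar{x})\bigr)$. This gives the compact identity
\begin{equation}
S=\ln(1-\bar{x})-\frac{1}{n}\sum_{i=1}^n \ln(1-x_i)~.
\end{equation}
Because $t\mapsto\ln t$ is strictly concave, Jensen's inequality yields $\frac{1}{n}\sum_i \ln(1-x_i)\le\ln\bigl(\frac{1}{n}\sum_i(1-x_i)\bigr)=\ln(1-\bar{x})$, hence $S\ge 0$, with equality if and only if all the $1-x_i$ coincide, i.e.\ $x_i=\bar{x}$ for every $i$. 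Substituting back into Eq.~\eqref{eq:Ph_general_theorem} delivers $P_h\ge 1-(1-\bar{x})^n$, attained exactly at homogeneous infectivity, which is the assertion.

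Equivalently, and this is the same inequality in disguise, one can bypass the series altogether and argue directly from Eq.~\eqref{eq:Ph_general}: minimising $P_h$ is maximising $\prod_{i=1}^n(1-x_i)$ subject to the fixed sum $\sum_i(1-x_i)=n(1-\bar{x})$, which by the arithmetic--geometric mean inequality is achieved uniquely when all factors are equal. I would likely present the Jensen version to stay aligned with the Lemma, and mention the AM--GM phrasing as a cross-check.

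The computations are routine once this inequality is identified, so I do not anticipate a deep obstacle; the only point requiring care is convergence at the boundary. The logarithmic form of $S$, and the series underlying it, presuppose $|x_i-\bar{x}|<1-\bar{x}$, i.e.\ $x_i<1$. To cover the full admissible range $x_i\in[0,1]$ I would fall back on the direct AM--GM argument, which stays valid (with the convention that a single factor $1-x_i=0$ forces $P_h=1$, trivially above the homogeneous bound) even when some $x_i=1$ and the logarithmic expression diverges. I would also be careful to carry the equality conditions through the chosen inequality so as to justify the stated uniqueness of the minimiser.
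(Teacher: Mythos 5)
Your proof is correct and takes essentially the same route as the paper: both reduce the claim, via Lemma~\ref{lemma:Formula_Ph} and the identity in Eq.~\eqref{eq:pro:Formula_Ph_proof3}, to Jensen's inequality for the concave logarithm --- your inequality $\frac{1}{n}\sum_{i}\ln(1-x_i)\le\ln(1-\bar{x})$ is exactly the paper's $\frac{1}{n}\sum_{i}\ln(1-z_i)\le 0$ with $z_i=(x_i-\bar{x})/(1-\bar{x})$ rewritten. Your AM--GM restatement and the explicit handling of the boundary case $x_i=1$ are minor (if welcome) refinements of the same argument, not a different approach.
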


\begin{proof}
The function $\ln(1-z_i)$ is non-convex for any $z_i \leq 1$ so that, by Jensen's inequality \cite{Jensen_ActaMathematica1906_Inequality}, 
\begin{equation}
\label{th:Within-Host-Inf_proof2}
\frac{1}{n} \sum_{i=1}^n \ln(1-z_i) \leq \ln \left(1- \frac{1}{n} \sum_{i=1}^n z_i \right)~,
\end{equation}
and this implies
\begin{equation}
\label{th:Within-Host-Inf_proof3}
\frac{1}{n} \sum_{i=1}^n  \ln \left(1-\frac{x_i-\bar{x}}{1-\bar{x}} \right) \leq 0,
\end{equation}
if $z_i=(x_i-\bar{x})/(1-\bar{x})$. From Eqs.~\eqref{th:Within-Host-Inf_proof3} and \eqref{eq:pro:Formula_Ph_proof3},
\begin{equation}
\label{th:Within-Host-Inf_proof4}
-\sum_{k=1}^{\infty} \frac{M_k}{k(1-\bar{x})^k} \leq 0~,
\end{equation}
and, from Eq.~\eqref{eq:Ph_general_theorem}, one obtains
\begin{equation}
\label{th:Within-Host-Inf_proof5}
P_h\left(\{x_{i}\}_{i=1}^{n},n\right) \geq 1-(1-\bar{x})^n~.
\end{equation}
The theorem follows from Eq.~\eqref{th:Within-Host-Inf_proof5}. Indeed, the equality only holds for situations with homogeneous within-host infectivity such that $M_k=0$ for every $k\geq 2$. The inequality holds strictly for any infection with heterogenous within-host infectivity with mean $\bar{x}$.
\end{proof}

The general model given by Eq.~\eqref{eq:Ph_general} is conceptually interesting. However, practical applications of dose-response models often deal with limited data consisting in a dose-response curve, i.e., the probability of infection estimated for a set of different doses. These data do not allow the parameters of the general model to be properly inferred (i.e. the infectivities for all microbes ingested by a host, $\{x_{i,h}\}_{i=1}^{n_h}$). It is more practically feasible to assume that the different sources of heterogeneity can be described in terms of a smaller number of parameters for the probability distribution of infectivity and dose. This motivates the adoption of models for the expected value of the probability of infection from a pragmatic standpoint. Such models are also crucial for comprehending the general effect of heterogeneities of types II and III, and will be utilised for this purpose in Sec.~\ref{sec:ModelsExpectedPinf}.

\section{Model $A$: Homogeneous dose and infectivity}
\label{sec:modelA}

This model assumes that all hosts ingest the same number of microbes and that all microbes have the same infectivity within all hosts, i.e., $n_h=n$ and $x_{i,h}=x$ for any $h$ and $i$.

Under these assumptions, Eq.~\eqref{eq:Ph_general} reduces to
\begin{equation}
\label{eq:P_ModelA}
    P^{(A)}(x,n)= 1-(1-x)^n~,
\end{equation}
and the number of infected hosts obeys a binomial distribution, $I \sim \text{Bin}(n,P^{(A)})$.

Consistent with intuition, this model suggests that the probability of infection increases with both infectivity and dose.

\section{Models for the expected value of the probability of infection}
\label{sec:ModelsExpectedPinf}

This section begins by presenting results for the expected value of the probability of infection, applicable in scenarios with small infectivities across any dose distribution. Subsequently, these findings will be demonstrated through specific models $A^{\prime}-C^{\prime}$ introduced in subsections \ref{subsec:ModelAprime}-\ref{subsec:ModelCprime}.

\subsection{General results for the expected probability of infection}
\label{subsec:GeneralExpectedPinf}

Theorem \ref{th:GeneralExpectedPinf} gives a general result for the dependence of the expected probability of infection on the infectivity and/or dose variability across different hosts.

\begin{theorem}
\label{th:GeneralExpectedPinf}
Let us suppose that the following conditions hold:
\begin{description}
   \item{(a)} The probability of infection for a \emph{given} infectivity and dose is given by $P^{(A)}(x,n)$ (Eq.~\ref{eq:P_ModelA}). 
   \item{(b)} The between-host variation of infectivity is described by a continuous random variable defined on $[0,1]$ with expectation $\mu_x$ and variance $v_x$.
   \item{(c)} The dose ingested by different hosts is a discrete random variable taking values in $\mathbb{Z}^+$ (i.e., at least one microbe is ingested), with expectation $\mu_n$ and variance $v_n$.
\end{description}

Then, in the limit of small infectivity, the expected probability of infection, $\hat{P}$, is a decreasing function of both $v_x$ and $v_n$.
\end{theorem}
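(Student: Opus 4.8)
The plan is to work directly with the expected probability of infection, which under conditions (a)--(c) reads $\hat{P} = 1 - \E[(1-x)^n]$, the expectation being taken over the joint law of the infectivity $x$ and the dose $n$. Since heterogeneity of types II and III are distinct sources (separate axes in Fig.~\ref{fig_Models_space_3D}), I would assume $x$ and $n$ act independently so that the expectation factorises. First I would average over the dose at fixed infectivity using the binomial theorem $(1-x)^n = \sum_{k\geq 0}\binom{n}{k}(-x)^k$, and take expectations termwise to obtain
\begin{equation}
\hat{P} = \mu_n \mu_x - \tfrac{1}{2}\,\E[n(n-1)]\,\E[x^2] + \tfrac{1}{6}\,\E[n(n-1)(n-2)]\,\E[x^3] - \cdots,
\end{equation}
where $\E[x^k]$ are the raw moments of the infectivity and $\E[n(n-1)\cdots(n-k+1)]$ the factorial moments of the dose.

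Next I would exploit the small-infectivity limit. Treating the infectivity as $O(\epsilon)$ with $\epsilon\to 0$, the moments scale as $\E[x^k]=O(\epsilon^k)$, so the leading term $\mu_n\mu_x$ is $O(\epsilon)$ and carries no variance dependence, while the first contribution depending on $v_x$ or $v_n$ is the $O(\epsilon^2)$ term. Substituting $\E[x^2]=v_x+\mu_x^2$ and $\E[n(n-1)]=v_n+\mu_n^2-\mu_n$ gives, to this order,
\begin{equation}
\hat{P} = \mu_n \mu_x - \tfrac{1}{2}\,\bigl(v_n+\mu_n^2-\mu_n\bigr)\bigl(v_x+\mu_x^2\bigr) + O(\epsilon^3).
\end{equation}

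Finally I would differentiate at fixed means, obtaining $\partial\hat{P}/\partial v_x = -\tfrac{1}{2}\E[n(n-1)]$ and $\partial\hat{P}/\partial v_n = -\tfrac{1}{2}\E[x^2]$. Both are non-positive: the first because condition (c) forces $n\geq 1$, whence $n(n-1)\geq 0$ and $\E[n(n-1)]\geq 0$; the second because $\E[x^2]\geq 0$ trivially. Hence $\hat{P}$ decreases with each variance, strictly so away from the degenerate cases $n\equiv 1$ (which also forces $v_n=0$, leaving $\hat{P}=\mu_x$ independent of $v_x$) and $x\equiv 0$ (no infection).

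The hard part will be making the expansion rigorous, because conditions (b)--(c) only guarantee finite first and second moments, so the higher factorial moments of $n$ in the formal series may diverge and a naive termwise expansion is not justified. I would circumvent this by averaging over the dose first, writing $\hat{P}=1-\E_x[G_n(1-x)]$ with $G_n(s)=\E[s^n]$ the probability generating function of the dose. Since $v_n<\infty$, $G_n$ is twice differentiable at $s=1^-$ with $G_n'(1)=\mu_n$ and $G_n''(1)=\E[n(n-1)]$, so a second-order Taylor expansion of $G_n(1-x)$ about $x=0$ with a controlled $o(x^2)$ remainder is legitimate; taking the expectation over $x$ in the regime where $x$ concentrates near zero then reproduces the two leading terms and shows the remainder is negligible relative to the $O(\epsilon^2)$ variance-dependent term that fixes the sign of the monotonicity.
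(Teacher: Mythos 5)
Your proposal is correct, and its computational core is the same as the paper's: expand $\E[(1-x)^n]$ to second order in the infectivity, identify $\E[x^2]=v_x+\mu_x^2$ and $\E[n(n-1)]=v_n+\mu_n^2-\mu_n$, and read the monotonicity in $v_x$ and $v_n$ off the signs of the second-order coefficients. Where you part ways is in the finishing and in rigor. The paper re-exponentiates the expansion (writing $\E_x(\E_n((1-x)^n))=\exp(\ln(1+y))$ with $\ln(1+y)\simeq y-y^2/2$) to obtain the closed exponential form $f_1$ of Eq.~\eqref{eq:hatPApprox}; your polynomial $\hat P=\mu_n\mu_x-\tfrac12\left(v_n+\mu_n^2-\mu_n\right)\left(v_x+\mu_x^2\right)+O(\epsilon^3)$ is exactly what one obtains by expanding $f_1$ to $O(\epsilon^2)$, so the two are equivalent at the order at which the theorem is asserted. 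What the paper's exponential form buys is a dose-response-shaped formula that is reused in Corollary~\ref{cor:n50_general}; what your route buys is twofold. First, your sign argument is cleaner: $\E[n(n-1)]\geq 0$ follows directly from hypothesis (c) ($n\geq 1$ almost surely), whereas the paper invokes $\mu_n^2-\mu_n+v_n\geq 0$ for a ``reasonable dose'' with $\mu_n\geq 1$. Second, your final paragraph repairs a point the paper passes over silently: the termwise manipulation behind Eq.~\eqref{eq:En_expansion_1} involves third- and higher-order (factorial) moments of $n$ (the paper's $O(x^3n^3)$ bookkeeping), which hypotheses (b)--(c) do not guarantee to be finite. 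Averaging over the dose first through the generating function $G_n(s)=\E[s^n]$, which is twice differentiable at $s=1^-$ precisely because $v_n<\infty$, and controlling the Peano remainder after scaling $x=\epsilon\tilde x$, makes the expansion legitimate under the stated hypotheses alone. Your identification of the degenerate cases ($n\equiv 1$, where $\hat P=\mu_x$ exactly and the dependence on $v_x$ vanishes) is also a correct refinement of the weakly-decreasing claim.
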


\begin{proof}
    The proof essentially uses a cumulant expansion procedure \cite{Riley-Hobson-Bence_MathsBook_3rdEd}. From assumption (a), the expected probability of infection is given by
    \begin{equation}
    \label{eq:Phat}
        \hat{P} = 1 - \E_x ( \E_n ((1-x)^n))~,
    \end{equation}
where $\E_x$ and $\E_n$ are the expectations corresponding to infectivity and dose, respectively.

The binomial power in Eq.~\eqref{eq:Phat} can be expanded to give $\E_x ( \E_n ((1-x)^n))=1+y$, where
\begin{equation}
\label{eq:En_expansion_1}
        y\equiv-\left(\mu_x + \frac{1}{2} (v_x+\mu_x^2)+O(x^3)\right)\mu_n+\frac{1}{2} (v_x+\mu_x^2+O(x^3))(v_n+\mu_n^2)+O(x^3n^3)~.
\end{equation}
Here, the identities $\E_x(x^2)=v_x+\mu_x^2$ and $\E_n(n^2)=v_n+\mu_n^2$ have been used to express $y$ as a function of expected values and variances of $x$ and $n$.

It is now convenient to express the expected value in Eq.~\eqref{eq:Phat} as $\E_x ( \E_n ((1-x)^n))=\exp(\ln(1+y))$ and use the expansion $\ln{(1+y)}=y-y^2/2+O(y^3)$. After introducing the expression \eqref{eq:En_expansion_1} into this expansion and neglecting $O(x^3)$ terms in the resulting function of $x$, one obtains the following approximation:
\begin{equation}
\label{eq:hatPApprox}
    \hat{P} \simeq f_1(\mu_x,v_x,\mu_n,v_n) \equiv 1- \exp \left(-\mu_x \mu_n + \frac{\mu_x^2}{2} (v_n-\mu_n)+\frac{v_x}{2}(\mu_n^2-\mu_n+v_n) \right)~.
\end{equation}

According to Eq.~\eqref{eq:hatPApprox}, $\hat{P}$ decreases with increasing $v_n$ for any $\mu_x>0$. $\hat{P}$ also decreases with increasing $v_x$ since $\mu_n^2-\mu_n+v_n\geq 0$ for any reasonable dose with $\mu_n \geq 1$ and $v_n \geq 0$.
\end{proof}

Corollary \ref{cor:n50_general} establishes a general link between the slope of the dose-response curve and the variance of the infectivity. This gives mathematical support to the ubiquitous flattening of the dose-response curve experimentally observed and modelled by assuming heterogeneous infectivity \cite{Furumoto1967,Haas_Book2014}.

\begin{corollary}
\label{cor:n50_general}
In the limit of small infectivity, the slope of $\hat{P}$ evaluated at the median dose---defined as the dose $\mu_n=\mu_{50}$ for which the expected probability of infection is 50\%---is an increasing function of $\mu_x$ and a decreasing function of $v_x$ for any dose distribution.
\end{corollary}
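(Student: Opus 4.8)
The plan is to work entirely from the small-infectivity approximation established in Theorem~\ref{th:GeneralExpectedPinf}, writing $\hat{P}\simeq 1-e^{g(\mu_n)}$ where
\[
g(\mu_n) = -\mu_x\mu_n+\frac{\mu_x^2}{2}(v_n-\mu_n)+\frac{v_x}{2}(\mu_n^2-\mu_n+v_n)
\]
is the exponent appearing in Eq.~\eqref{eq:hatPApprox}. Since the dose-response curve is $\hat{P}$ viewed as a function of the mean dose, its slope is $d\hat{P}/d\mu_n=-e^{g}\,g'(\mu_n)$. The median dose $\mu_{50}$ is defined implicitly by $\hat{P}=1/2$, i.e.\ $e^{g(\mu_{50})}=1/2$, equivalently $g(\mu_{50})=-\ln 2$. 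Substituting the median condition into the slope collapses everything to the compact form $S\equiv\left.d\hat{P}/d\mu_n\right|_{\mu_{50}}=-\tfrac12\,g'(\mu_{50})$, so the entire problem reduces to tracking how the single quantity $g'(\mu_{50})$ varies with $\mu_x$ and $v_x$.

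Next I would record the elementary facts needed for the sign analysis. Differentiating gives $g'(\mu_n)=-\mu_x-\tfrac{\mu_x^2}{2}+v_x\mu_n-\tfrac{v_x}{2}$ and $g''(\mu_n)=v_x>0$; the contributions arising from any $\mu_n$-dependence of $v_n$ are subleading, which is exactly what produces the ``for any dose distribution'' clause. Solving the median condition $g(\mu_{50})=-\ln 2$ in the small-infectivity limit, the dominant balance $-\mu_x\mu_{50}\approx-\ln 2$ shows that the median dose is large and scales as $\mu_{50}\approx(\ln 2)/\mu_x$. Substituting this leading value into $S$ already gives a transparent leading-order expression of the form
\[
S\simeq\frac{\mu_x}{2}-\frac{(\ln 2)\,v_x}{2\mu_x},
\]
from which both monotonicity statements can be read off: the first term grows with $\mu_x$ while the (negative) second term grows toward zero as $\mu_x$ increases and decreases as $v_x$ increases.

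The main obstacle is that $\mu_{50}$ is itself a function of $\mu_x$ and $v_x$ through the implicit median condition, so the partial derivatives of $S$ must account for $\partial\mu_{50}/\partial\mu_x$ and $\partial\mu_{50}/\partial v_x$. I would make the argument rigorous by implicit differentiation of $g(\mu_{50})=-\ln 2$: using that the slope is positive (so $g'(\mu_{50})<0$) together with $\partial g/\partial\mu_x\approx-\mu_{50}<0$ and $\partial g/\partial v_x=\tfrac12(\mu_{50}^2-\mu_{50}+v_n)>0$, one gets $\partial\mu_{50}/\partial\mu_x<0$ and $\partial\mu_{50}/\partial v_x>0$. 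Feeding these into $\partial_{\mu_x}S=-\tfrac12\,\tfrac{d}{d\mu_x}g'(\mu_{50})$ and $\partial_{v_x}S=-\tfrac12\,\tfrac{d}{dv_x}g'(\mu_{50})$, and using $g''=v_x>0$, the two derivatives come out sign-definite: every term contributes with the same sign, yielding $\partial_{\mu_x}S>0$ and $\partial_{v_x}S<0$ whenever $\mu_{50}\gtrsim 1$. The delicate point to nail down is that the small-infectivity limit genuinely controls the analysis---it is what guarantees $\mu_{50}$ is large and the slope remains positive, so that the implicit corrections \emph{reinforce} rather than overturn the leading-order signs; once this is secured the corollary follows, and the same computation on the $\log$-dose axis (which merely multiplies $S$ by the positive factor $\mu_{50}$) shows the conclusion is insensitive to the choice of dose scale.
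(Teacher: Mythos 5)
Your proposal is correct, and its core is identical to the paper's: starting from Eq.~\eqref{eq:hatPApprox}, you use the median condition $e^{g(\mu_{50})}=1/2$ to express the slope at the median as $-\tfrac{1}{2}g'(\mu_{50})$, which is exactly the paper's Eq.~\eqref{eq:cor_n50_general},
\begin{equation*}
\left.\frac{\text{d}\hat{P}}{\text{d}\mu_n}\right|_{\mu_n=\mu_{50}}=\frac{\mu_x}{2}\left(1+\frac{\mu_x}{2}\right)-\frac{v_x}{2}\left(\mu_{50}-\frac{1}{2}\right).
\end{equation*}
The difference lies in what happens next. The paper reads both monotonicity claims directly off this formula, implicitly treating $\mu_{50}$ as a fixed parameter and noting only that $\mu_{50}>1/2$ makes the $v_x$-coefficient negative; the word ``trivially'' glosses over the fact that $\mu_{50}$ itself depends on $\mu_x$ and $v_x$ through the median condition. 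You address this point explicitly, first by the leading-order substitution $\mu_{50}\approx(\ln 2)/\mu_x$ and then by implicit differentiation of $g(\mu_{50})=-\ln 2$, showing $\partial\mu_{50}/\partial\mu_x<0$ and $\partial\mu_{50}/\partial v_x>0$, so that the induced corrections (entering through $g''=v_x>0$) reinforce rather than oppose the explicit signs. This buys a strictly stronger conclusion than the paper's argument delivers: the slope is monotone in $\mu_x$ and $v_x$ as a \emph{total} function of these parameters, not merely at fixed $\mu_{50}$. One caveat: your implicit-differentiation step relies on $g'(\mu_{50})<0$ (positive slope at the median), which you attribute somewhat loosely to the small-infectivity limit. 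The clean justification is to take the physically relevant median, namely the first crossing of $\hat{P}=1/2$: since $g(0)\geq 0>-\ln 2$ and $g$ is a convex quadratic in $\mu_n$, the function $g$ is necessarily decreasing at that first crossing, so the sign condition holds there by construction. With that remark supplied, your argument is a complete and more careful version of the paper's proof.
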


\begin{proof}
    The median dose, $\mu_{50}$, corresponds to $\hat{P}=1/2$ which, in the limit of small infectivity, reduces to $f_1(\mu_x,v_x,\mu_{50},v_n)=1/2$ (cf. Eq.~\ref{eq:hatPApprox}). From this condition, the rate of variation of $\hat{P}$ at $\mu_n=\mu_{50}$ can be expressed as 
    \begin{equation}
    \label{eq:cor_n50_general}
        \left.\frac{\text{d} \hat{P}}{\text{d}\mu_n} \right|_{\mu_n=\mu_{50}} = \frac{\mu_x}{2} \left(1+\frac{\mu_x}{2} \right)-\frac{v_x}{2}\left(\mu_{50}-\frac{1}{2} \right)~.   
    \end{equation}
    This equation trivially proves the corollary for any reasonable median dose $\mu_{50}>1/2$.
\end{proof}

\subsection{Models of type $A^{\prime}$: Homogeneous infectivity and heterogenous dose}
\label{subsec:ModelAprime}
Models of type $A^{\prime}$ extend Model A to include heterogeneity of type III (see Figure \ref{fig_Models_space_3D}). Specifically, it is posited that the doses ingested by different hosts are independent and identically distributed (\emph{iid}) random variables drawn from a probability mass function $p_n^{\text{d}}(\bm\xi)$, where $\bm\xi$ are the distribution parameters. The expected probability of infection can then be expressed in terms of $P^{(A)}(x,n)$ as follows:
\begin{equation}
\label{eq:PAprime}
    P^{(A^{\prime})}(x,\bm\xi) = \sum_{n =1}^{\infty} P^{(A)}(x,n) p^{\text{d}}(n;\bm\xi)~.
\end{equation}

The subsequent part of this section illustrates the effect of heterogeneous doses on $P^{(A^{\prime})}(x,\bm\xi) $ for two commonly assumed scenarios for $p^{\text{d}}(n;\bm\xi)$.

\subsubsection{Example $A^{\prime}_1$: Poisson-distributed dose}
\label{subsec:A1prime}
Assuming that the number of particles ingested by a host is a Poisson distribution with mean $\mu_n$ (i.e. $n_h \sim \text{Pois}(\mu_n)$) leads to the well-known exponential dose-response model for the probability of infection \cite{Haas_AmJEpidemiology1983,Haas_Book2014}:
\begin{equation}
\label{eq:PAprime1}
    P^{(A^{\prime}_1)}(x,\mu_n) = 1-e^{-x\mu_n}~.
\end{equation}
This expression follows from Eq.~\eqref{eq:PAprime} by using the probability mass function, $p^{\text{d}}(n;\mu_n) = \mu_n^n e^{-\mu_n} / n!$, for the Poisson distribution.

%Since the Poisson distribution involves a single parameter, it is not possible to study the effect of independently varying the mean and variance of the dose as done in Theorem \ref{th:GeneralExpectedPinf}.

\subsubsection{Example $A^{\prime}_2$: Negative binomial dose distribution}
\label{subsec:A2prime}
The Poisson distribution assumes $\mu_n=v_n$ and does not allow the effect of $v_n$ on the expected probability of infection to be independently analysed from $\mu_n$. In fact, doses can be characterised by a larger dispersion than that allowed by a Poisson distribution (i.e. $v_n>\mu_n$)~\cite{Anderson-May_1991Book,Haas_RiskAnalysis2002,teunis_hierarchical_2008}. Over-dispersion can be modelled by assuming a negative binomial distribution, $n_h \sim \text{NB}(r,p)$, with parameters  $r>0$ and $p\in[0,1]$. The corresponding probability function is
\begin{equation}
    p^{\text{d}}(n_h;r,p)=\frac{\Gamma(n_h+r)}{n_h! \Gamma(r)} (1-p)^{n_h} p^r~,
\end{equation}
which, after introducing into \eqref{eq:PAprime1}, results in the following expected probability of infection: 
\begin{equation}
\label{eq:PAprime2_r_p}
    P^{(A^{\prime}_2)}(x,r,p) = 1- \left(\frac{p}{p+x(1-p)} \right)^r~.
\end{equation}

The parameters $r$ and $p$ are related to the mean and variance of the dose, $\mu_n$ and $v_n$, as follows:
\begin{equation}
\label{eq:Aprime2_Relation_rp_muv}
    r=\frac{\mu_n^2}{v_n-\mu_n} \text{ and } p=\frac{\mu_n}{v_n}~.
\end{equation}

Using these expressions in Eq.~\eqref{eq:PAprime2_r_p} allows the expected probability of infection to be written as a function $P^{(A^{\prime}_2)}(x,\mu_n,v_n)$ of the mean and variance of the dose.

Note that the condition $r>0$ implies $v_n \geq \mu_n$ for the negative binomial distribution. The smallest possible dispersion for a negative binomial distribution corresponds to $v_n = \mu_n$ which is the condition defining a Poisson distribution. The following proposition shows that the expected probability of infection in examples $A^{\prime}_1$ and $A^{\prime}_2$ are identical when $v_n = \mu_n$.

\begin{proposition}
\label{pro:Aprime2_reduces_to_Aprime1}
    Model $A^{\prime}_2$ reduces to model $A^{\prime}_1$ when the variance of the dose approaches the mean from above (i.e. in the limit $v_n \searrow \mu_n$).
\end{proposition}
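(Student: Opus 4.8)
The plan is to substitute the parametrization \eqref{eq:Aprime2_Relation_rp_muv} into the closed form \eqref{eq:PAprime2_r_p} and take the limit $v_n\searrow\mu_n$ directly. First I would introduce the small parameter $\epsilon=v_n-\mu_n\to 0^+$, so that $r=\mu_n^2/\epsilon$ and $p=\mu_n/(\mu_n+\epsilon)$, giving $1-p=\epsilon/(\mu_n+\epsilon)$. With these, the base of the power simplifies: $p+x(1-p)=(\mu_n+x\epsilon)/(\mu_n+\epsilon)$, so that the fraction inside the parentheses collapses to $p/[p+x(1-p)]=\mu_n/(\mu_n+x\epsilon)$, eliminating the apparent singularity as $p\to1$.

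Next I would analyse the power $\bigl(\mu_n/(\mu_n+x\epsilon)\bigr)^{\mu_n^2/\epsilon}$ through its logarithm, $-(\mu_n^2/\epsilon)\ln(1+x\epsilon/\mu_n)$. Expanding the logarithm for small $\epsilon$ as $\ln(1+x\epsilon/\mu_n)=x\epsilon/\mu_n-\tfrac{1}{2}(x\epsilon/\mu_n)^2+O(\epsilon^3)$ and multiplying by $-\mu_n^2/\epsilon$ yields $-x\mu_n+\tfrac{1}{2}x^2\epsilon+O(\epsilon^2)$, which tends to $-x\mu_n$ as $\epsilon\to0^+$. Exponentiating, the power converges to $e^{-x\mu_n}$, and therefore $P^{(A^{\prime}_2)}(x,\mu_n,v_n)\to 1-e^{-x\mu_n}=P^{(A^{\prime}_1)}(x,\mu_n)$, which is the claim.

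The only delicate point is that this is an indeterminate limit of the form $1^{\infty}$ (since $p\to1$ while $r\to\infty$), so I would make sure the two divergences are handled together rather than separately; recasting the base as $\mu_n/(\mu_n+x\epsilon)$ before taking the limit is precisely what resolves the indeterminacy cleanly. A minor point of care is justifying that the leading two terms of the logarithmic expansion suffice: since $x\in[0,1]$ the argument $x\epsilon/\mu_n$ is small, and the remainder of order $\epsilon^3$ contributes $O(\epsilon^2)$ after multiplication by $\mu_n^2/\epsilon$, so interchanging the limit with the expansion is legitimate. Beyond this I expect no further obstacles.
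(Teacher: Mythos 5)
Your proof is correct and follows essentially the same route as the paper: after substituting the parametrization \eqref{eq:Aprime2_Relation_rp_muv} into \eqref{eq:PAprime2_r_p}, your base $\mu_n/(\mu_n+x\epsilon)$ with exponent $\mu_n^2/\epsilon$ is exactly the paper's $\bigl(r/(r+x\mu_n)\bigr)^r$ under $r=\mu_n^2/\epsilon$, and the limit $v_n\searrow\mu_n$ becomes $r\to\infty$. The only cosmetic difference is that the paper invokes the standard limit $\lim_{r\to\infty}(1+z/r)^r=e^z$ directly, whereas you re-derive it via the logarithmic expansion, which is a perfectly sound (and slightly more self-contained) way to handle the $1^\infty$ indeterminacy.
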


\begin{proof}
    To prove the proposition, it is convenient to express the expected probability of infection in terms of $r$ and $\mu_n$. Combining Eqs.~\eqref{eq:PAprime2_r_p} and \eqref{eq:Aprime2_Relation_rp_muv}, leads to 
    \begin{equation}
    \label{eq:PAprime2_mu_r}
        P^{(A^{\prime}_2)}(x,\mu_n,r)=1-\left(\frac{r}{r+x\mu_n} \right)^r~.
    \end{equation}

    From Eq.~\eqref{eq:Aprime2_Relation_rp_muv}, the limit $v_n \searrow \mu_n$ corresponds to $r \rightarrow \infty$. Through simple algebraic manipulations, one obtains the probability \eqref{eq:PAprime1} for $r \rightarrow \infty$:
    \begin{equation}
        \lim_{r \rightarrow \infty} P^{(A^{\prime}_2)}(x,\mu_n,r) = 1-\lim_{r \rightarrow \infty} \left( 1+\frac{-x \mu_n}{-r}\right)^{-r}=1-e^{-x\mu_n}~.
    \end{equation}
\end{proof}

The decrease of the infection probability when increasing the variance $v_n$, holds for $P^{(A^{\prime}_2)}(x,r,p)$ even without the small infectivity requirement of Theorem \ref{th:GeneralExpectedPinf}. This is shown in Proposition \ref{pro:PAprime2_variance} after proving the following lemma:

\begin{lemma}
\label{le:fyr}
The function 
    \begin{equation}
        f(y,r) = y+(r+y) \ln \left( \frac{r}{r+y}\right)~
    \end{equation}
satisfies $f(y,r) \leq 0$ for any $y \geq 0$ and $r>0$.
\end{lemma}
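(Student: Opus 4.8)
The plan is to fix $r>0$ and treat $f$ as a function of the single variable $y$ on $[0,\infty)$, reducing the two-variable claim to a one-dimensional statement. First I would record that the inequality is tight at the left endpoint: substituting $y=0$ gives $f(0,r)=0+r\ln(1)=0$, so it suffices to show that $f$ never exceeds this boundary value as $y$ increases.

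The cleanest route avoids calculus altogether and relies on the elementary bound $\ln z \leq z-1$, valid for all $z>0$ with equality only at $z=1$. Setting $z = r/(r+y)$, which lies in $(0,1]$ whenever $y\geq 0$ and $r>0$, this gives
\begin{equation}
\ln\left(\frac{r}{r+y}\right) \leq \frac{r}{r+y}-1 = \frac{-y}{r+y}~.
\end{equation}
Multiplying through by the positive quantity $r+y$ yields $(r+y)\ln(r/(r+y)) \leq -y$, and adding $y$ to both sides gives $f(y,r)\leq 0$ directly. Equality holds precisely when $z=1$, i.e. $y=0$, matching the boundary computation above.

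As a self-contained alternative I would instead differentiate. A short calculation shows $\partial f/\partial y = \ln(r/(r+y))$, because the $+1$ from differentiating the leading $y$ cancels the $-1$ produced by the product rule acting on $(r+y)$. Since $r/(r+y)\leq 1$ for $y\geq 0$, this partial derivative is non-positive, so $f(\cdot,r)$ is non-increasing on $[0,\infty)$; combined with $f(0,r)=0$ this again delivers $f(y,r)\leq 0$.

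There is no serious obstacle here---the result is a disguised form of $\ln z \leq z-1$. The only mild subtlety worth flagging is domain bookkeeping: one must confirm that $r/(r+y)$ stays in $(0,1]$, so that the logarithm is defined and the sign of the inequality is controlled, which is immediate from $r>0$ and $y\geq 0$. This lemma should then feed into Proposition~\ref{pro:PAprime2_variance} as the key sign condition showing that the expected infection probability decreases with $v_n$ in model $A^{\prime}_2$.
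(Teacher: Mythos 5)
Your proof is correct, and your primary argument takes a genuinely different and more elementary route than the paper. You reduce the lemma to the classical bound $\ln z \leq z-1$ with $z = r/(r+y) \in (0,1]$, which yields $(r+y)\ln\left(\frac{r}{r+y}\right) \leq -y$ in one line, with the equality case $y=0$ identified for free; no calculus is needed. The paper instead argues in two steps: it first shows that $f(\cdot,r)$ has a unique zero at $y=0$ (via the substitution $z=\frac{r+y}{er}$, reducing $f=0$ to $z\ln z = -e^{-1}$, whose solution is $z=e^{-1}$), and then computes $\partial f/\partial y = -\ln\left(\frac{r+y}{r}\right) < 0$ to conclude non-positivity on $y>0$. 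Your fallback derivative argument is the same computation as the paper's second step, but you pair it directly with the boundary value $f(0,r)=0$ rather than with a uniqueness-of-zero claim, which is logically tighter — the paper's root-finding step is more than is actually needed once monotonicity and the value at $y=0$ are in hand. What each approach buys: the paper's version additionally certifies that $y=0$ is the \emph{only} zero of $f$, while yours is shorter, avoids the clever but inessential substitution, and makes transparent that the lemma is a disguised form of $\ln z \leq z-1$. Your closing remark about how the lemma feeds into Proposition~\ref{pro:PAprime2_variance} via the sign of $\partial P^{(A^{\prime}_2)}/\partial r$ matches the paper's use of it exactly.
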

\begin{proof}
This can be easily demonstrated by plotting $f(y,r)$ as a function of $y$ for different values of $r$. Alternatively, a rigorous proof for any $r>0$ can be given from the following two facts: 
\begin{description}
    \item[(i)] $f(y,r)$ has only one zero and is located at $y=0$ for any $r>0$. To show this, it is convenient to define a new variable $z=\frac{r+y}{er}$ such that $f(y,r)=0$ reduces to $z \ln z =-e^{-1}$. The solution to this equation is $z = e^{-1}$ which corresponds to $y=0$.
    \item[(ii)] $f(y,r)$ is a decreasing function of $y$ in the interval $y>0$. Indeed, $\partial f(y,r)/\partial y = - \ln \left(\frac{r+y}{r} \right) <0$. 
\end{description}

Following these results, $f(y,r)$ must be non-positive for any $y>0$ and $r>0$.
\end{proof}

\begin{proposition}
\label{pro:PAprime2_variance}
    For a fixed mean dose, the expected probability of infection of model $A^{\prime}_2$ is a decreasing function of the dose variance.
\end{proposition}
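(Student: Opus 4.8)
The plan is to work with the representation of the expected infection probability in terms of the mean dose $\mu_n$ and the shape parameter $r$ given in Eq.~\eqref{eq:PAprime2_mu_r}, namely $P^{(A^{\prime}_2)}(x,\mu_n,r)=1-\left(\frac{r}{r+x\mu_n}\right)^r$, and to let Lemma~\ref{le:fyr} carry the analytic load. The first step is to note that, at fixed mean dose, varying the variance is the same as varying $r$: from Eq.~\eqref{eq:Aprime2_Relation_rp_muv} one has $v_n=\mu_n+\mu_n^2/r$, so $v_n$ is a strictly decreasing function of $r$ for any fixed $\mu_n>0$. Consequently, proving that the expected probability of infection decreases with $v_n$ is equivalent to proving that $P^{(A^{\prime}_2)}(x,\mu_n,r)$ is an increasing function of $r$ at fixed $\mu_n$ (and fixed $x$).

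Next I would set $y=x\mu_n\geq 0$ and $g(r)=\left(\frac{r}{r+y}\right)^r$, so that $P^{(A^{\prime}_2)}=1-g(r)$, and show that $g$ is decreasing in $r$ by differentiating its logarithm. A direct computation of $\ln g(r)=r\bigl(\ln r-\ln(r+y)\bigr)$ gives $\frac{\text{d}}{\text{d}r}\ln g(r)=\ln\left(\frac{r}{r+y}\right)+\frac{y}{r+y}$. Placing both terms over the common denominator $r+y$ reveals that this is precisely $f(y,r)/(r+y)$, with $f$ the function of Lemma~\ref{le:fyr}. Since $r+y>0$ and $f(y,r)\leq 0$ by that lemma, one obtains $\frac{\text{d}}{\text{d}r}\ln g(r)\leq 0$, whence $g$ is non-increasing in $r$ and $P^{(A^{\prime}_2)}=1-g$ is non-decreasing in $r$.

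Combining the two steps, $P^{(A^{\prime}_2)}$ increases with $r$ at fixed $\mu_n$ and therefore decreases with $v_n$, which is the assertion. The analytic heart of the argument is the identification of $\frac{\text{d}}{\text{d}r}\ln g(r)$ with $f(y,r)/(r+y)$, after which Lemma~\ref{le:fyr} finishes the proof immediately; notably, no small-infectivity approximation is required, consistent with the remark preceding the statement. I expect the only genuinely error-prone point to be bookkeeping the direction of monotonicity: because $r$ and $v_n$ are inversely related at fixed $\mu_n$, one must take care that ``increasing in $r$'' correctly translates into ``decreasing in $v_n$'', and that the endpoint behaviour ($r\to\infty$, i.e.\ $v_n\searrow\mu_n$) is consistent with Proposition~\ref{pro:Aprime2_reduces_to_Aprime1}.
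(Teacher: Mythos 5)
Your proof is correct and takes essentially the same route as the paper: both work from the representation $P^{(A^{\prime}_2)}(x,\mu_n,r)$ of Eq.~\eqref{eq:PAprime2_mu_r}, reduce the claim to monotonicity in $r$ via the relation $r=\mu_n^2/(v_n-\mu_n)$, and let Lemma~\ref{le:fyr} fix the sign of the derivative. Your logarithmic-derivative identity $\frac{\text{d}}{\text{d}r}\ln g(r)=f(x\mu_n,r)/(r+x\mu_n)$ is exactly the paper's formula $\partial P^{(A^{\prime}_2)}/\partial r=-r^r f(x\mu_n,r)/(r+x\mu_n)^{r+1}$ in disguise, so the two arguments coincide up to bookkeeping.
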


\begin{proof}
    It is again convenient to use expression  \eqref{eq:PAprime2_mu_r} for $P^{(A^{\prime}_2)}(x;\mu_n,r)$. 

    The variation of $P^{(A^{\prime}_2)}$ with $v_n$ at fixed $\mu_n$ can be calculated as
    \begin{equation}
    \label{eq:partial_P_v}
        \frac{\partial P^{(A^{\prime}_2)}}{\partial v_n} = \frac{\partial P^{(A^{\prime}_2)}}{\partial r} \frac{\partial r}{\partial v_n}~.
    \end{equation}

    On the one hand, Lemma \ref{le:fyr} ensures that 
    \begin{equation}
    \label{eq:partial_P_r}
        \frac{\partial P^{(A^{\prime}_2)}}{\partial r}=-\frac{r^r}{(r+x\mu_n)^{r+1}} f(x \mu_n,r) \geq 0~
    \end{equation}
    for any $x\mu_n \geq 0$ and $r>0$. On the other hand, the following inequality holds:
    \begin{equation}
    \label{eq:partial_r_v}
        \frac{\partial r}{\partial v_n} = -\frac{\mu_n^2}{(v_n-\mu_n)^2}<0~.
    \end{equation}

    From Eq.~\eqref{eq:partial_P_v}, the inequalities \eqref{eq:partial_P_r} and \eqref{eq:partial_r_v} imply that $\partial P^{(A^{\prime}_2)}/\partial v_n <0$ and this proves the proposition.
\end{proof}

The next proposition relates all the models of type $A$ and $A^\prime$ and gives further evidence that increasing dose variability decreases the expected probability of infection for any given infectivity.

\begin{proposition}
\label{pro:Inequalities_PA}
    The following inequalities hold for any infectivity $x$ and dose distribution parameters $\mu_n$ and $v_n>\mu_n$:
    \begin{equation}
        P^{(A)}(x,\mu_n) \geq P^{(A^{\prime}_1)}(x,\mu_n) \geq P^{(A^{\prime}_2)}(x,\mu_n,v_n)~.
    \end{equation}

\end{proposition}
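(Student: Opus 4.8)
The plan is to split the chain into its two constituent inequalities and treat them separately, since the left pair compares the deterministic model against the exponential model, while the right pair compares the exponential model against the negative-binomial model. Both pieces share the same fixed mean dose $\mu_n$ and the same infectivity $x$, so each reduces to a one-variable analytic inequality.

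For the first inequality, $P^{(A)}(x,\mu_n) \geq P^{(A^{\prime}_1)}(x,\mu_n)$, I would substitute the explicit forms from Eqs.~\eqref{eq:P_ModelA} and \eqref{eq:PAprime1}. Cancelling the leading $1$ and reversing the inequality, the claim reduces to $(1-x)^{\mu_n} \leq e^{-x\mu_n}$. Since $x \in [0,1]$ gives $1-x \geq 0$ and $\mu_n > 0$, this is an immediate consequence of the elementary bound $1-x \leq e^{-x}$ (equivalently $\ln(1-x) \leq -x$), which holds for all real arguments and becomes an equality only at $x=0$. So the first inequality costs essentially nothing.

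For the second inequality, $P^{(A^{\prime}_1)}(x,\mu_n) \geq P^{(A^{\prime}_2)}(x,\mu_n,v_n)$, the cleanest route is to reuse the two preceding propositions rather than to manipulate the explicit formulas again. Proposition~\ref{pro:PAprime2_variance} states that, at fixed $\mu_n$, the map $v_n \mapsto P^{(A^{\prime}_2)}(x,\mu_n,v_n)$ is strictly decreasing on $v_n > \mu_n$; hence its supremum over this range is attained in the limit $v_n \searrow \mu_n$. Proposition~\ref{pro:Aprime2_reduces_to_Aprime1} identifies that limit with $P^{(A^{\prime}_1)}(x,\mu_n)$. Combining the two gives $P^{(A^{\prime}_2)}(x,\mu_n,v_n) \leq \lim_{v_n \searrow \mu_n} P^{(A^{\prime}_2)}(x,\mu_n,v_n) = P^{(A^{\prime}_1)}(x,\mu_n)$ for every $v_n > \mu_n$, which is exactly the second inequality.

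The argument is largely a bookkeeping exercise and I do not anticipate a serious obstacle; the only point requiring care in the second inequality is the logical step that strict monotonicity together with the computed limit forces the limiting value to be an upper bound for the entire open range $v_n > \mu_n$. If one preferred a self-contained derivation avoiding the limit, I would instead verify $P^{(A^{\prime}_1)} \geq P^{(A^{\prime}_2)}$ directly, using the form \eqref{eq:PAprime2_mu_r}, by showing that $g(y,r) \equiv y + r\ln\!\big(r/(r+y)\big) \geq 0$ for $y = x\mu_n \geq 0$ and $r > 0$; this follows from $g(0,r)=0$ together with $\partial g/\partial y = y/(r+y) \geq 0$, mirroring the monotonicity technique already used in Lemma~\ref{le:fyr}. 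Chaining the two inequalities then yields the stated result.
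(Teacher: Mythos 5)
Your proof is correct and follows essentially the same route as the paper: the first inequality via the elementary bound $(1-x)^{\mu_n} \leq e^{-x\mu_n}$, and the second by combining Propositions~\ref{pro:Aprime2_reduces_to_Aprime1} and~\ref{pro:PAprime2_variance} (monotonicity in $v_n$ plus the limit $v_n \searrow \mu_n$). Your optional self-contained alternative via $g(y,r)\geq 0$ is also valid, but the main argument matches the paper's.
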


\begin{proof}
    The first inequality trivially follows from the comparison of Eqs.~\eqref{eq:P_ModelA} and \eqref{eq:PAprime1} and the fact that $e^{-x\mu_n} \geq (1-x)^{\mu_n}$ for any $\mu_n>0$. The second inequality follows from Propositions \ref{pro:Aprime2_reduces_to_Aprime1} and \ref{pro:PAprime2_variance}. Fig.~\ref{fig_P_ModelsA_Ap1_Ap2} demonstrates the inequalities for particular values of $\mu_n$ and $v_n$.
\end{proof}

\begin{figure}%[h]
\centering
\includegraphics[width=8 cm]{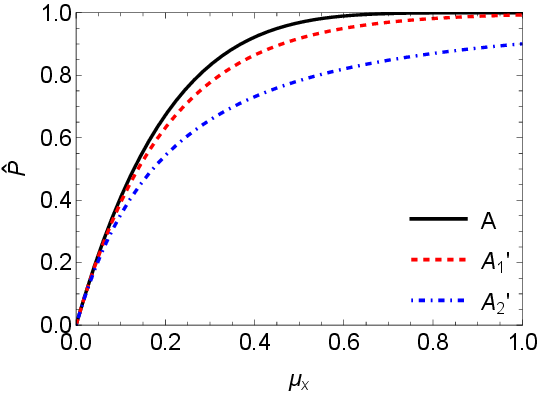}
\caption{Graphical illustration of the inequalities given in Proposition \ref{pro:Inequalities_PA}. The dependence of the probability of infection on the infectivity $x$ for a system with homogeneous infectivity (Model $A$, continuous black line) is compared to the dependence on $x$ of the expected probability of infection of models $A^{\prime}_1$ (dashed red line) and $A^{\prime}_2$ (dot-dashed blue line). The dose was set to $n=5$ for model $A$. The mean and variance of the dose were set to $\mu_n=5$ and $v_n=20$, respectively, for models $A^{\prime}_1$ and $A^{\prime}_2$.}
\label{fig_P_ModelsA_Ap1_Ap2}
\end{figure} 

\subsection{Models of type B: Heterogeneous infectivity within hosts and homogeneous dose}
\label{subsec:ModelB}

Models of type B extend Model A to account for heterogeneity of type I, by assuming that the infectivity depends on the microbe type but not on the host (see Figure \ref{fig_Models_space_3D}). Mathematically, this corresponds to $x_{i,h} = x_i$ within any host, $h$. The infectivities $\{x_i\}_{i=1}^n$ of the $n$ microbes in a dose are described as \emph{iid} random variables drawn from a distribution with probability density function (PDF) $\rho_{\text{m}}(x;\bm\xi_{\text{B}})$. Here, $\bm\xi_{\text{B}}$ are the parameters of the distribution. The expected value of the probability of infection under these conditions is~\cite{Haas_RiskAnalysis2002}:
\begin{equation}
\label{eq:Pinv_B}
    P^{(B)}(\mu_m(\bm\xi_{\text{B}}),n)=\prod_{i=1}^n \int_0^1 P^{(A)}(x_i,n) \rho_{\text{m}}(x_i;\bm\xi_{\text{B}}) \text{d} x_i= 1-(1-\mu_m(\bm\xi_{\text{B}}))^n~,
\end{equation}
where $\mu_m(\bm\xi_{\text{B}})=\int_0^1 x \rho_{\text{m}}(x;\bm\xi_{\text{B}}) \text{d}x$ is the expected value of the infectivity within a host. 

Interestingly, the expected probability $P^{(B)}$ depends solely on the infectivity parameter $\mu_m$; the variability in $x$ does not influence $P^{(B)}$. In other words, when a set of hosts ingest doses with a mean infectivity of $\mu_m$, the mean probability of infection calculated across different hosts is solely determined by $\mu_m$. This contrasts with the probability that \emph{an individual host} becomes infected, which, for a given mean infectivity $\bar{x}$, increases with greater heterogeneity in infectivity (Theorem~\ref{th:Within-Host-Inf}).

The relationship between $P^{(B)}$ and $\mu_m$ mirrors the relationship between $P^{(A)}$ and $x$. Consequently, for models of type $B$, it can be inferred that the expected probability of infection for a dose of $n$ microbes with \emph{iid} random infectivities can effectively be represented by a set of $n$ microbes with homogeneous infectivity $\mu_m$. Following this premise, the infectivity within hosts in models of type $C$ and $C^\prime$ will be denoted as $x$, despite the possibility for such models to account for variability in infectivity within hosts by simply substituting $x$ with $\mu_m$.

\subsection{Models of type $C$: Heterogeneous infectivity between hosts and homogeneous dose}
\label{subsec:ModelC}

Models of type $C$ represent an expansion of models of type $B$, introducing a host-dependent mean infectivity to accommodate heterogeneities of type I and II (see Figure \ref{fig_Models_space_3D}).  The expansion of model $A$ to include heterogeneity of type II represents a specific case wherein all microbes within a host share the same infectivity, denoted as $x_{i,h}=x_h$ for every microbe $i$, yielding $\mu_m=x_h$. Within a set of $H$ hosts, the mean infectivities, $\{x_h\}_{h=1}^H$, are regarded as \emph{iid} random variables drawn from a distribution with probability density function (PDF) $\rho_h(x;\bm\xi_{\text{C}})$. The probability of infection for a randomly selected host is expressed as follows:
\begin{equation}
\label{eq:Pinv_C}
    P^{C}(\bm\xi_{\text{C}},n) = \int_0^1 P^{(A)}(x,n) \rho_h(x;\bm\xi_{\text{C}}) \text{d}x~.
\end{equation}

The general results presented in Sec.~\ref{subsec:GeneralExpectedPinf} for small infectivity hold for models of type $C$ which are particular cases of the model in Sec.~\ref{subsec:GeneralExpectedPinf} with $v_n=0$.

An alternative family of models of type $C$ could be defined by assuming that the parameters of the distribution of infectivity within hosts, $\bm\xi_{\text{B}}$, vary across different hosts instead of assuming a variation of $x_h$. A formulation along these lines will be used in Sec.~\ref{sec:GrowthModel} to establish a link between single-hit dose-response models and microbial growth processes. From a practical viewpoint, however, assuming that the infectivity is a random variable as in Eq.~\eqref{eq:Pinv_C} has proven useful for many applications of dose-response models and this is the option chosen to illustrate models of type $C$ in the following example.

\subsubsection{Example C$_1$: Beta-distributed between-host infectivity}
\label{sec:C1}
It is common to assume that the randomness of infectivity between hosts can be described with a beta distribution~\cite{Moran_JHygiene1954_DoseResponse,Furumoto1967}, i.e. $x \sim \text{Beta}(\alpha,\beta)$, where $\alpha$ and $\beta$ are positive parameters. The corresponding PDF is
\begin{equation}
\label{eq:rho_beta}
    \rho_h^{(C_1)}(x;\alpha,\beta)=\frac{\Gamma(\alpha+\beta)}{\Gamma(\alpha) \Gamma(\beta)}x^{\alpha-1}(1-x)^{\beta-1}~.
\end{equation}
The beta distribution has support in the interval $x\in[0,1]$ and is therefore a mathematically reasonable description of the infectivity which is ultimately a probability. 

Introducing the PDF from Eq.~\eqref{eq:rho_beta} into Eq.~\eqref{eq:Pinv_C}, yields an expected probability of infection
\begin{equation}
\label{eq:Pinf_C1}
    P^{(C_1)}(\alpha,\beta,n)=1-\frac{\Gamma(\alpha+\beta) \Gamma(n+\beta)}{\Gamma(\beta) \Gamma(n+\alpha+\beta)}~.
\end{equation}

Expressing $\rho_h^{(C_1)}(x;\alpha,\beta)$ and $P^{(C_1)}(\alpha,\beta,n)$ in terms of $\alpha$ and $\beta$ is typical in dose-response studies assuming a beta-distributed infectivity. To understand the impact of heterogeneity, however, it is useful to express $\alpha$ and $\beta$ as a function of the expected value, $\mu_x$, and variance, $v_x$, of the infectivity as follows:
\begin{equation}
\label{eq:alpha_beta_vs_mux_vx}
    \alpha = \mu_x \left(\frac{\mu_x (1-\mu_x)}{v_x}-1 \right)~, 
    \quad
    \beta = (1-\mu_x) \left(\frac{\mu_x (1-\mu_x)}{v_x}-1 \right)~.
\end{equation}
Note that the conditions $\alpha, \beta >0$ impose an upper bound for the variance:
\begin{equation}
\label{eq:vmax}
v_x < v_{\max}(\mu_x) \equiv \mu_x (1-\mu_x)~.
\end{equation}
In particular, this implies an absolute upper bound for the variance: $v_x<1/4$. 

Four regimes can be defined in terms of the dependence of $\rho_h^{(C_1)}(x;\alpha,\beta)$ on $x$ (see Fig.~\ref{fig_rhox_4regimes}). The regimes depend on whether each of the parameters $\alpha$ and $\beta$ is larger or smaller than one:

\begin{description}
 \item[Regime (i) - Highly heterogeneous infectivity:] This regime corresponds to $\alpha,\beta < 1$. The infectivity PDF $\rho_h^{(C_1)}$ diverges for $x=0$ and $x=1$ (see Fig.~\ref{fig_rhox_4regimes}(a)). This describes populations of hosts that tend to have a bimodal distribution of infectivities: The microbe can be highly infective for some hosts whereas it tends to be harmless for others. The relative proportion of hosts at risk of infection is larger when $\beta<\alpha$ as this makes the divergence of $\rho_h^{(C_1)}$ at $x=1$ dominant over the one at $x=0$. 
 
 In this regime, $\mu_x \in (0,1)$ and the variance takes relatively high values, $v_x \in (\max \{b_1(\mu_x),b_2(\mu_x)\},v_{\max}(\mu_x))$ (see Fig.~\ref{fig_4Regimes_mux_vx}). Here,
\begin{align}
    b_1(\mu_x)\equiv \frac{\mu_x^2 (1-\mu_x)}{1+\mu_x}~,
    b_2(\mu_x)\equiv \frac{\mu_x (1-\mu_x)^2}{2-\mu_x}~,
\end{align}
are boundary functions for $v_x$ corresponding to the conditions $\alpha=1$ and $\beta=1$ (derived from Eqs.~\eqref{eq:alpha_beta_vs_mux_vx}). It is important to note that regime (i) is described as a high heterogeneity regime since it allows the largest possible values of the variance $v_{\max}$ for given  $\mu_x$. However, this does not imply that $v_x$ will necessarily take large values since, for instance,  $v_{\max}(\mu_x)$ tends to zero for $\mu_x \rightarrow 0$ or $\mu_x \rightarrow 1$.
 
 \item[Regime (ii) - High infectivity with intermediate heterogeneity:] $\alpha>1$ and $\beta < 1$. The PDF $\rho_h^{(C_1)}$ is zero at $x=0$ and monotonically increases with increasing $x$ to exhibit a divergence at $x=1$ (see Fig.~\ref{fig_rhox_4regimes}(b)). The mean infectivity takes high values, $\mu_x \in (1/2,1)$, and the variance takes intermediate values, $v_x \in (b_2(\mu_x),b_1(\mu_x))$ (Fig.~\ref{fig_4Regimes_mux_vx}).
 \item[Regime (iii) - Low infectivity with intermediate heterogeneity:] $\alpha<1$ and $\beta > 1$. This regime is complementary to regime (ii) in the sense that $\rho_h^{(C_1)}$ diverges at $x=0$ and monotonically decreases with increasing $x$ until $\rho_h^{(C_1)}=0$ is reached at $x=1$ (see Fig.~\ref{fig_rhox_4regimes}(c)). This regime corresponds to situations with relatively small mean infectivity, $\mu_x \in (0,1/2)$, and intermediate values of the variance, $v_x \in (b_1(\mu_x),b_2(\mu_x))$.
 \item[Regime (iv) - Low heterogeneity:] $\alpha > 1$ and $\beta > 1$. In this regime, $\rho_h^{(C_1)}$ is zero for $x=0,1$ and takes positive values for any $x\in(0,1)$ (see Fig.~\ref{fig_rhox_4regimes}(d)). The mean infectivity can take any admissible value, $\mu_x \in (0,1)$, and the variance takes low values, $v_x \in (0, \min \{b_1(\mu_x),b_2(\mu_x)\})$. 
\end{description}

In all regimes, the proportion of hosts at high risk of infection increases for decreasing $\beta$ and increasing $\alpha$. 

\begin{figure}%[h]
\centering
\includegraphics[width=12 cm]{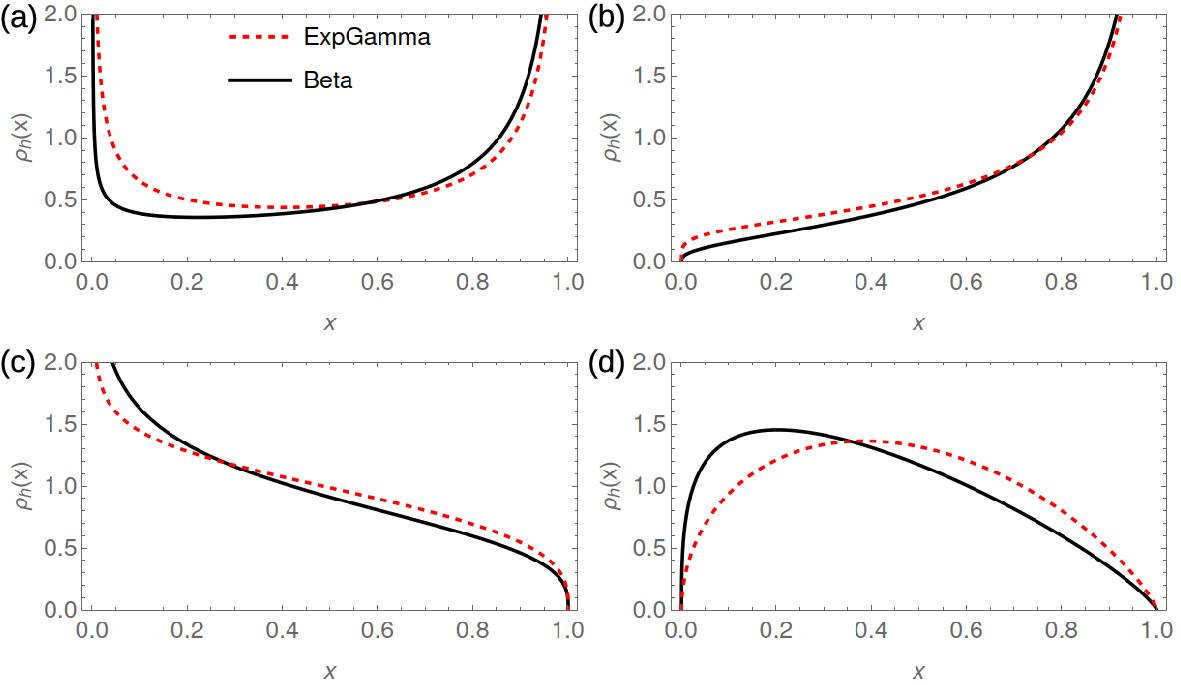}
\caption{Probability density function of the infectivity for model $C_1$ (dashed line, Eq.~\eqref{eq:rho_beta}) and the within-host microbial infection model proposed in Sec.~\ref{sec:Birth-Death_Particular} (continous line, Eq.~\eqref{eq:rhox_ExpGamma}). Different panels illustrate the dependence of $\rho_h^{(C_1)}(x;\alpha,\beta)$ on infectivity in each of the regimes described in the text: (a) Regime (i) with $(\alpha,\beta)=(0.5,0.25)$, (b) regime (ii) with $(\alpha,\beta)=(1.2,0.36)$, (c) regime (iii) with $(\alpha,\beta)=(0.89,1.32)$, and (d) regime (iv) with $(\alpha,\beta)=(1.5,1.8)$. When parametrised by the mean and variance of the effective microbe mortality considered in the microbial infection model of Sec.~\ref{sec:Birth-Death_Particular}, the four panels correspond to: (a) $(\mu_\lambda,v_\lambda)=(0.5,1.0)$, (b) $(\mu_\lambda,v_\lambda)=(0.3,0.25)$, (c) $(\mu_\lambda,v_\lambda)=(1.5,1.7)$ and (d) $(\mu_\lambda,v_\lambda)=(1.2,0.8)$.}
\label{fig_rhox_4regimes}
\end{figure} 

\begin{figure}%[h]
\centering
\includegraphics[width=10 cm]{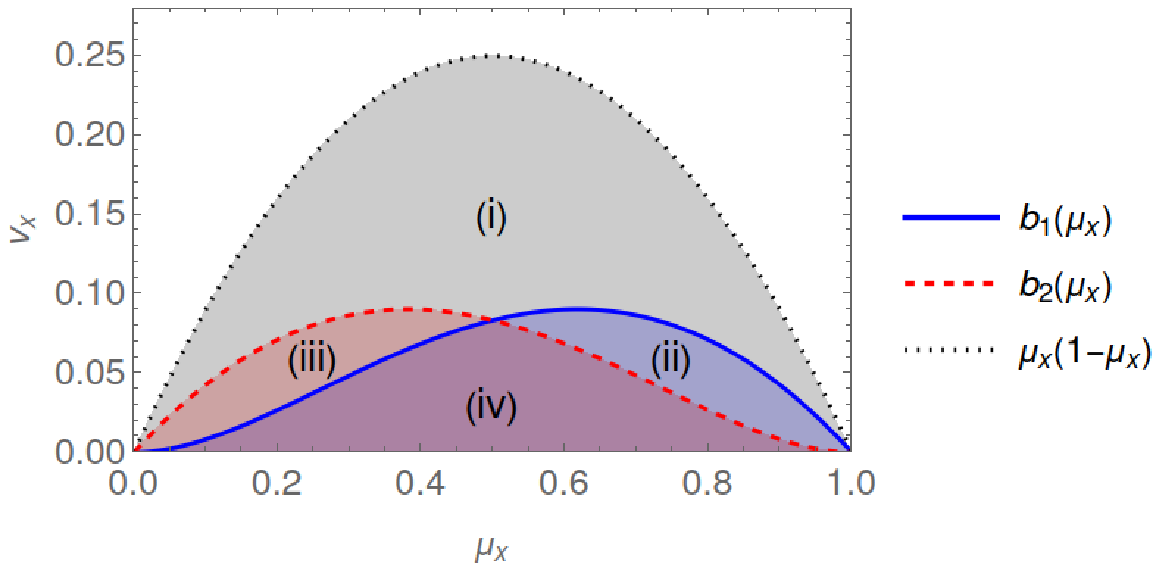}
\caption{Location of the four regimes for the shape of the infectivity PDF $\rho_h^{(C_1)}(x;\alpha,\beta)$ in the space spanned by the mean and variance of the infectivity.}
\label{fig_4Regimes_mux_vx}
\end{figure} 

A fully analytical study of the trends of the probability of infection $P^{(C_1)}$ when varying $\mu_x$ and $v_x$ is challenging and a graphical analysis will be presented instead. The continuous lines in Fig.~\ref{fig_PC1_vs_mu_and_v_2panels_neq50}(a) show the dependence of $P^{(C_1)}$ on the mean infectivity for two values of the variance. For given $v_x$, the condition \eqref{eq:vmax} implies that $P^{(C_1)}$ is only defined for $\mu_x \in (\mu_x^-, \mu_x^+)$, where 
\begin{equation}
\mu_x^{\pm}(v_x) = (1 \pm \sqrt{1-4v_x})/2~.
\end{equation}

At the boundaries $\mu_x^{\pm}$, the probability of infection satisfies
\begin{equation}
\label{eq:lim_PC1_mu+-}
    \lim_{\mu_x \rightarrow \mu_x^{\pm}} P^{(C_1)} = \mu_x^{\pm}~.
\end{equation}

Consequently, the endpoints of the curves for $P^{(C_1)}$ versus $\mu_x$ lie along a diagonal line with a slope of 1 in the $(\mu_x, P^{(C_1)})$ space (as indicated by the dotted line in Fig.~\ref{fig_PC1_vs_mu_and_v_2panels_neq50}(a)). When $\mu_x$ takes relatively small values, the probability $P^{(C_1)}$ increases with $\mu_x$ for any fixed value of $v_x$. This behavior aligns with our expectations based on Theorem \ref{th:GeneralExpectedPinf}, specifically Eq.~\eqref{eq:hatPApprox} with $v_n=0$. 

In contrast, for sufficiently high values of $\mu_x$, $P^{(C_1)}$ decreases. This might appear counterintuitive at first glance. However, this decrease is a consequence of keeping $v_x$ fixed. Indeed, when $\mu_x$ becomes large enough, the system enters regime (i) for any $v_x>0$ (refer to Fig.~\ref{fig_4Regimes_mux_vx}). In this regime, the divergence of $\rho_h^{(C_1)}(x;\alpha,\beta)$ becomes increasingly pronounced at both $x=0$ and $x=1$ as $\mu_x$ increases. The prominence of divergence at $x=0$ is essential for maintaining the fixed value of $v_x$. In more qualitative terms, as the mean infectivity increases, the likelihood of microbes having either high or low infectivity also rises.

The graphical representation of $P^{(C_1)}$ vs. $v_x$ suggests that $P^{(C_1)}$ monotonically decreases with increasing $v_x$ for any $n$ and $\mu_x$ (see the examples for two values of $\mu_x$ shown by the continuous curves in Fig.~\ref{fig_PC1_vs_mu_and_v_2panels_neq50}(b)). The general decrease of $P^{(C_1)}$ with $v_x$ predicted for small infectivities in Theorem \ref{th:GeneralExpectedPinf} seems to hold for any mean infectivity in this example. In the limit of small variance for the infectivity, 
\begin{equation}
\label{eq:limPC1_vx0}
    \lim_{v_x \searrow 0} P^{(C_1)} = 1-(1-\mu_x)^n~,
\end{equation}
i.e. model A is recovered. In the limit of high variance, one can use Euler's infinite product formula for $1/\Gamma(z)$ \cite{Abramowitz-Stegun} to show that 
\begin{equation}
\label{eq:limPC1_vxvmax}
    \lim_{v_x \nearrow v_{\max}(\mu_x)} P^{(C_1)} = \mu_x~.
\end{equation}

This is graphically illustrated by the examples in Fig.~\ref{fig_PC1_vs_mu_and_v_2panels_neq50}(b) which show that the curves of $P^{(C_1)}$ vs. $v_x$ end at the dotted line corresponding to $v_x \nearrow v_{\max}(\mu_x)$.

\begin{figure}%[h]
\centering
\includegraphics[width=12 cm]{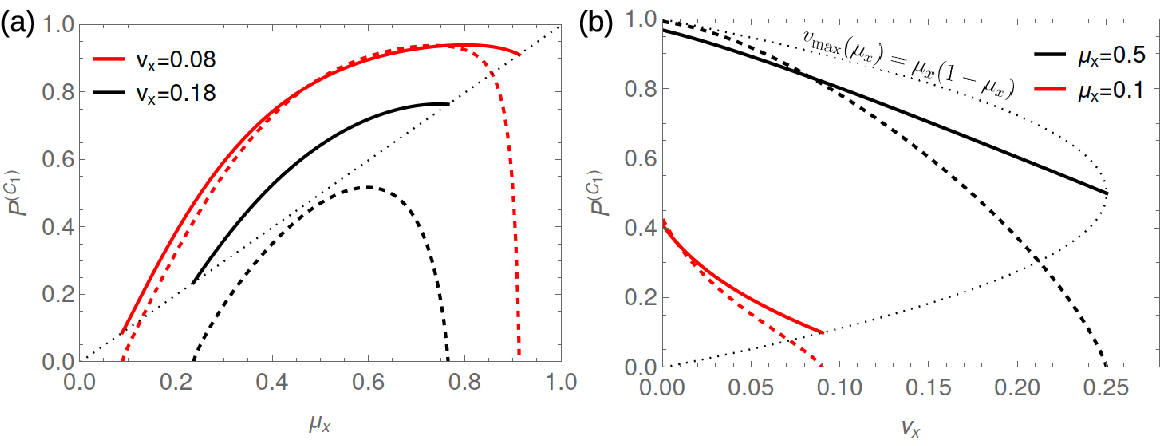}
\caption{Expected probability of infection for examples $C_1$ with dose $n=5$ (continuous lines, Eq.~\eqref{eq:Pinf_C1}) and $\tilde{C_1}$ (dashed lines, Eq.~\eqref{eq:Pinf_tildeC1}). Panel (a) shows the dependence of $P^{(C_1)}$ and $P^{(\tilde{C_1})}$ on the mean infectivity, $\mu_x$, for two values of the variance, $v_x$, as marked by the legend. The dotted diagonal line indicates the bounds for $P^{(C_1)}$  given by Eq.~\eqref{eq:lim_PC1_mu+-}.  Panel (b) shows the dependence of $P^{(C_1)}$ and $P^{(\tilde{C_1})}$ on the infectivity variance $v_x$. The dotted line shows the limit of $P^{(C_1)}$ as $v_x \nearrow v_{\max}(\mu_x)$ (cf. Eqs.~\eqref{eq:vmax} and \eqref{eq:limPC1_vxvmax}).}
\label{fig_PC1_vs_mu_and_v_2panels_neq50}
\end{figure} 

The validity of Corollary \ref{cor:n50_general} for model $C_1$ can be analytically studied for small and relatively homogeneous infectivity, as shown in Proposition \ref{pro:slope_C1}.

\begin{proposition}
\label{pro:slope_C1}
    If the infectivity distribution satisfies $\mu_x \ll 1$ and $v_x \ll \mu_x$, the slope of $P^{(C_1)}(\alpha,\beta,n)$ evaluated at the median dose decreases when increasing $v_x$. 
\end{proposition}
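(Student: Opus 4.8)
The plan is to exploit the conditions $\mu_x \ll 1$ and $v_x \ll \mu_x$ to replace the Gamma-function expression \eqref{eq:Pinf_C1} by the familiar closed-form ``approximate'' beta-Poisson curve, for which the median dose and the slope can be obtained explicitly. Translating the hypotheses through Eq.~\eqref{eq:alpha_beta_vs_mux_vx}, the condition $v_x \ll \mu_x$ forces $\beta \approx \mu_x/v_x \gg 1$, while $\mu_x \ll 1$ gives $\alpha = \beta\,\mu_x/(1-\mu_x) \ll \beta$. In this regime the standard asymptotics $\Gamma(z+a)/\Gamma(z) \simeq z^{a}$ apply to both Gamma ratios appearing in \eqref{eq:Pinf_C1}, namely $\Gamma(\alpha+\beta)/\Gamma(\beta) \simeq \beta^{\alpha}$ and $\Gamma(n+\beta)/\Gamma(n+\alpha+\beta) \simeq (n+\beta)^{-\alpha}$, so that, treating $n$ as a continuous dose variable,
\begin{equation}
P^{(C_1)}(\alpha,\beta,n) \simeq 1 - \left(1+\frac{n}{\beta}\right)^{-\alpha}~.
\end{equation}

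Next I would extract the median dose and the slope from this closed form. Setting $P^{(C_1)}=1/2$ gives $(1+n/\beta)^{\alpha}=2$, hence the median dose $n_{50}=\beta\,(2^{1/\alpha}-1)$. Differentiating with respect to $n$ yields $\mathrm{d}P^{(C_1)}/\mathrm{d}n = (\alpha/\beta)(1+n/\beta)^{-\alpha-1}$, and evaluating at $n=n_{50}$, where $(1+n_{50}/\beta)^{-\alpha}=1/2$, collapses the power to $2^{-1-1/\alpha}$. The slope at the median dose therefore reduces to the compact expression
\begin{equation}
\label{eq:slope_C1_sketch}
\left.\frac{\mathrm{d}P^{(C_1)}}{\mathrm{d}n}\right|_{n=n_{50}} \simeq \frac{\alpha}{2\beta}\,2^{-1/\alpha}~.
\end{equation}

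The final step is to read off the dependence on the infectivity variance. From Eq.~\eqref{eq:alpha_beta_vs_mux_vx} the prefactor is $\alpha/\beta=\mu_x/(1-\mu_x)$, which is independent of $v_x$, so all of the $v_x$-dependence in \eqref{eq:slope_C1_sketch} sits in the factor $2^{-1/\alpha}$. Since $\mathrm{d}\alpha/\mathrm{d}v_x = -\mu_x^2(1-\mu_x)/v_x^2 < 0$, the exponent $1/\alpha$ is an increasing function of $v_x$ and hence $2^{-1/\alpha}$ is decreasing; multiplying by the positive, $v_x$-independent prefactor shows the slope at the median dose decreases with $v_x$, which is the claim (and is consistent with Corollary~\ref{cor:n50_general} in the case $v_n=0$).

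The main obstacle I anticipate is controlling the Gamma-function asymptotics rigorously rather than heuristically: the relative error in $\Gamma(z+a)/\Gamma(z)\simeq z^{a}$ is of order $a^2/z$, so one must check that $\alpha^2/\beta$ and $\alpha^2/(n_{50}+\beta)$ are genuinely negligible under $\mu_x \ll 1$ and $v_x \ll \mu_x$, and that these corrections do not feed back into the location of $n_{50}$ at the order that governs the slope. A secondary point to justify is treating the integer dose $n$ as a continuous variable when differentiating; this is the same device used implicitly in Corollary~\ref{cor:n50_general}, so I would simply adopt the continuous dose-response curve $n \mapsto P^{(C_1)}$ and note that the monotonicity conclusion is insensitive to this smoothing.
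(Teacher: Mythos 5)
Your proposal is correct and takes essentially the same route as the paper: both reduce $P^{(C_1)}$ to the approximate beta--Poisson form $1-(1+n/\beta)^{-\alpha}$ under $\beta\gg\alpha$, $\beta\gg 1$, obtain $n_{50}=\beta\,(2^{1/\alpha}-1)$, and conclude monotonicity of the slope in $v_x$ at fixed $\mu_x$ (the paper by differentiating the binary logarithm of the slope with respect to $v_x$, you by noting $\alpha/\beta=\mu_x/(1-\mu_x)$ is $v_x$-independent while $\alpha$ decreases in $v_x$ --- equivalent bookkeeping). Incidentally, your slope $\frac{\alpha}{2\beta}2^{-1/\alpha}=\frac{\alpha}{\beta}2^{-1-1/\alpha}$ is the correct value; the paper's stated $\frac{\alpha}{\beta}2^{1-1/\alpha}$ carries a sign slip in the exponent, a constant factor that is immaterial to the monotonicity conclusion.
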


\begin{proof}
    From Eq.~\eqref{eq:alpha_beta_vs_mux_vx}, the conditions $\mu_x \ll 1$ and $v_x \ll \mu_x$ correspond to $\beta \gg \alpha$ and $\beta \gg 1$. Under these conditions,  Stirling's formula for the gamma function, $\Gamma(z) \sim e^{-z} z^{z-1/2} \sqrt{2 \pi}$, allows $P^{(C_1)}$ to be approximated as
    \begin{equation}
        \label{eq:Pinf_tildeC1}
        P^{(C_1)}(\alpha,\beta,n) \simeq P^{(\tilde{C_1})}(\alpha,\beta,n) \equiv 1-\left( 1+\frac{n}{\beta}\right)^{-\alpha}~.
    \end{equation}

The median infectious dose corresponding to $P^{(\tilde{C_1})}=1/2$ is $n_{50} = \beta (2^{1/\alpha}-1)$ and this yields the following slope for the infection curve:
\begin{equation}
    \left.\frac{\text{d} P^{(\tilde{C_1})}}{\text{d}n} \right|_{n=n_{50}} = \frac{\alpha}{\beta} 2^{1-1/\alpha}~.   
\end{equation}

Expressing the slope as a function of $\mu_x$ and $v_x$ through Eqs.~\eqref{eq:alpha_beta_vs_mux_vx} and taking the binary logarithm gives
\begin{equation}
    \log_2  \left.\frac{\text{d} P^{(\tilde{C_1})}}{\text{d}n} \right|_{n=n_{50}} = f^{(\tilde{C_1})}(\mu_x,v_x) \equiv 1+ \log_2 \frac{\mu_x}{1-\mu_x} -\frac{v_x}{\mu_x(\mu_x(1-\mu_x)-v_x)}~.
\end{equation}

Differentiating the logarithmic slope with respect to $v_x$ yields
\begin{equation}
    \frac{\partial f^{(\tilde{C_1})}(\mu_x,v_x)}{\partial v_x} = -\frac{(1-\mu_x)}{(v_x-\mu_x(1-\mu_x))^2} <0~,
\end{equation}
which proves the proposition.

\end{proof}

The approximate probability of infection $P^{(\tilde{C_1})}$ derived in the proof of Proposition \ref{pro:slope_C1} approaches zero at the boundaries $\mu_x^{\pm}(v_x)$ for the infectivity (see the dashed curves in Fig.~\eqref{fig_PC1_vs_mu_and_v_2panels_neq50}(a)). This significantly deviates from the behaviour of the exact probability $P^{(C_1)}$ given by Eq.~\eqref{fig_PC1_vs_mu_and_v_2panels_neq50}. This failure of the approximation is expected: On the one hand, the necessary condition $\mu_x \ll 1$ for $P^{(\tilde{C_1})}$ to approximate $P^{(C_1)}$ is not satisfied when $\mu_x \nearrow \mu_x^{+}$. On the other hand, the condition $v_x \ll \mu_x$ is violated when $\mu_x \searrow \mu_x^{-}$. 

Comparing the continuous and dashed lines in Fig.~\eqref{fig_PC1_vs_mu_and_v_2panels_neq50}(b) reveals that $P^{(\tilde{C_1})}$ offers a better approximation to $P^{(C_1)}$ for small values of $v_x$ and $\mu_x$ (compare the good agreement between the curves for small $v_x$ with the poorer agreement between the black curves for small $v_x$). In the limit of vanishing variance, the approximate probability approaches the value
\begin{equation}
    \label{eq:limPC2_vx0}
    \lim_{v_x \searrow 0} P^{(\tilde{C_1})} = 1-e^{-n\mu_x/(1-\mu_x)}~.
\end{equation}
This approximate value exceeds the exact probability $P^{(C_1)}$ (see Eq.~\eqref{eq:limPC1_vx0}) since $e^{-\mu_x/(1-\mu_x)}>1-\mu_x$ for any $\mu_x \in (0,1)$. 

Fig.~\eqref{fig_PC1_vs_mu_and_v_2panels_neq50}(b) also shows that the deviation of $P^{(\tilde{C_1})}$ from $P^{(C_1)}$ worsens as $v_x$ increases. For given $\mu_x$, this deviation is more pronounced when $v_x$ approaches the maximal variance, $v_{\max}(\mu_x)$. In such cases, $P^{(\tilde{C_1})}$ approaches zero and significantly deviates from the corresponding limit for $P^{(C_1)}$ given by Eq.~\eqref{eq:limPC1_vxvmax}.

\subsection{Models of type $C^\prime$: Between-host heterogeneous infectivity and dose}
\label{subsec:ModelCprime}

Models of type $C^{\prime}$ incorporate the three types of heterogeneity described in Fig.~\ref{fig_Models_space_3D}. These models can either be regarded as extensions of models of type $A^\prime$ to incorporate heterogeneities of type I and II or extensions of models of type $C$ to account for type III heterogeneity. Following this, the expected probability of infection for models of type $C^{\prime}$ can be calculated in several equivalent ways:
\begin{align}
\label{eq:PCprime_0}
    P^{(C^{\prime})}(\bm\xi,\bm\xi_{\text{C}}) =& \sum_{n =1}^{\infty} p^{\text{d}}(n;\bm\xi) \int_0^1 P^{(A)}(x,n)  \rho_h(x;\bm\xi_{\text{C}}) \text{d}x~,\\
    \label{eq:PCprime_1}
    =& \sum_{n =1}^{\infty} P^{(C)}(n,\bm\xi_{\text{C}}) p^{\text{d}}(n;\bm\xi)~,\\
    \label{eq:PCprime_2}
    =& \int_0^1 P^{(A')}(x,\bm\xi) \rho_h(x;\bm\xi_{\text{C}}) \text{d}x~.
\end{align}

%Indeed, the summation and integral in Eq.~\eqref{eq:PCprime_0} commute for the functions involved in this equation (**Reference: Apostol***) and this leads to the equivalence of Eqs.~\eqref{eq:PCprime_0}-\eqref{eq:PCprime_2}.

The general results for the expected probability of infection in the limit of small infectivity presented in Sec.~\ref{subsec:GeneralExpectedPinf} directly apply to models of type $C^{\prime}$.

The next subsections present two models of type $C^\prime$ that have been proposed in previous works. Both assume beta-distributed infectivity (i.e., PDF given by Eq.~\eqref{eq:rho_beta}). As a consequence, both examples will exhibit the four regimes described in Sec.~\ref{sec:C1}.

\subsubsection{Example $C_1^\prime$: Beta-distributed infectivity and Poisson-distributed dose}
\label{sec:Example_C1prime}

Here, it is assumed that the dose is $n_h \sim \text{Pois}(\mu_n)$ so that model $C_1^\prime$ can be interpreted as an extension of model $A_1^\prime$ (Sec.~\ref{subsec:A1prime}) to include a beta-distributed infectivity. Using Eq.~\eqref{eq:PAprime1} to set $P^{(A')}(x,\bm\xi)=P^{(A^{\prime}_1)}(x,\mu_n)$ in Eq.~\eqref{eq:PCprime_2} leads to the following expected probability of infection:

\begin{equation}
\label{eq:P_C1p}
    P^{(C_1^{\prime})}(\alpha,\beta,\mu_n) =  1 - ~_1F_1 (\alpha,\alpha+\beta,-\mu_n)~.
    %\int_0^1 P^{(A_1')}(x,\mu_n) \rho^{(C_1)}_h(x;\alpha,\beta) \text{d}x =
\end{equation}

Here, $_1F_1$ is the Kummer confluent function~\cite{Abramowitz-Stegun}. Eq.~\eqref{eq:P_C1p} is the exact beta-Poisson model proposed by  Furumoto and Mickey  \cite{Furumoto1967}. In the same work, the authors derived the beta-Poisson approximation,
\begin{equation}
\label{eq:Pinf_tildeC1prime}
    P^{(C_1^{\prime})}(\alpha,\beta,\mu_n) \simeq P^{(\tilde{C_1}^\prime)}(\alpha,\beta,\mu_n) \equiv 1-\left( 1+\frac{\mu_n}{\beta}\right)^{-\alpha}~,
\end{equation}
which is valid for $\beta \gg \alpha$ and $\beta \gg 1$, i.e. for $\mu_x \ll 1$ and $v_x \ll \mu_x$ (cf. Eq.~\eqref{eq:alpha_beta_vs_mux_vx}). %The probability $P^{(\tilde{C_1}^\prime)}(\alpha,\beta,\mu_n)$ is widely applied in dose-response studies.  

The functional relationship between $P^{(\tilde{C_1}^\prime)}$ and $\mu_n$ mirrors that of $P^{(\tilde{C_1})}$ and $n$ (compare Eqs.~\eqref{eq:Pinf_tildeC1prime} and \eqref{eq:Pinf_tildeC1}). Consequently, Proposition \ref{pro:slope_C1}, established for model $C_1$, directly extends to model $C_1^{\prime}$ through the expression of the median dose in terms of $\mu_n$. While this proposition may bear resemblance to previous demonstrations expressing $P^{(\tilde{C_1}^\prime)}$ as a function of $\alpha$ and the median dose~\cite{Haas_Book2014}, it is essential to highlight a crucial distinction. Proposition~\ref{pro:slope_C1} elucidates the decline in the slope of $P^{(\tilde{C_1}^\prime)}$ with increasing $v_x$ while holding $\mu_x$ constant. Such delineation between the variance and mean of infectivity cannot be achieved by parameterising $P^{(\tilde{C_1}^\prime)}$ in terms of $\alpha$ and the median dose.

Despite the formal similarity between Eqs.~ \eqref{eq:Pinf_tildeC1} and \eqref{eq:Pinf_tildeC1prime} and the fact that both are valid in the limit of low and relatively homogeneous infectivity, $P^{(\tilde{C_1}^\prime)}$ does not result by using $P^{(\tilde{C_1})}$ and the Poisson probability mass function in Eq.~\eqref{eq:PCprime_1}.  Irrespective of that, the deviations of $P^{(\tilde{C_1}^\prime)}$ from the exact probability $P^{(C_1^\prime)}$ are analogous to those described in Sec.~\ref{sec:C1} for $P^{(\tilde{C_1})}$ compared to $P^{(C_1)}$.

\subsubsection{Example $C_2^\prime$: Beta-distributed infectivity and negative binomial dose distribution}

This model assumes a negative binomial distribution for the dose, $n_h \sim \text{NB}(r,p)$. It can be viewed as an extension of model $A_2^\prime$ (Sec.~\ref{subsec:A2prime}) to incorporate a beta-distributed infectivity. Setting $P^{(A')}(x,\bm\xi)=P^{(A^{\prime}_2)}(x,r,p)$ in Eq.~\eqref{eq:PCprime_2} gives the probability of infection~\cite{Haas_RiskAnalysis2002},
\begin{equation}
    P^{(C_2^{\prime})}(\alpha,\beta,r,p) =  1 -  ~_2F_1 (r,\beta,\alpha+\beta,1-p) p^r~,
    %\int_0^1 P^{(A_2')}(x,r,p) \rho^{(C_1)}_h(x;\alpha,\beta)  \text{d}x =
\end{equation}
where $_2F_1$ is the Gaussian hypergeometric function~\cite{Abramowitz-Stegun}.

The following Corollaries about the dependence of model $C_2^\prime$ on the variability of the dose follow from the results presented in Sec.~\ref{subsec:A2prime} for model $A_2^\prime$.

\begin{corollary}
\label{cor:Cprime2_reduces_to_Cprime1}
    The expected infection probability of model $C_2^{\prime}$ reduces to that of model $C_1^{\prime}$ when the variance of the dose approaches the mean from above (i.e. in the limit $v_n \searrow \mu_n$).
\end{corollary}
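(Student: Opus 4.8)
The plan is to reduce this corollary to the already-established Proposition~\ref{pro:Aprime2_reduces_to_Aprime1}, exploiting the fact that both $P^{(C_1^{\prime})}$ and $P^{(C_2^{\prime})}$ are obtained from the corresponding models of type $A^\prime$ by integrating against the \emph{same} beta infectivity density. Writing both probabilities in the integral form of Eq.~\eqref{eq:PCprime_2} gives
\begin{equation}
P^{(C_2^{\prime})} = \int_0^1 P^{(A^{\prime}_2)}(x,r,p)\,\rho_h^{(C_1)}(x;\alpha,\beta)\,\text{d}x, \qquad P^{(C_1^{\prime})} = \int_0^1 P^{(A^{\prime}_1)}(x,\mu_n)\,\rho_h^{(C_1)}(x;\alpha,\beta)\,\text{d}x,
\end{equation}
where $\rho_h^{(C_1)}$ is the beta density of Eq.~\eqref{eq:rho_beta}. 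Since the infectivity distribution is identical in the two models, proving the corollary amounts to showing that the limit $v_n \searrow \mu_n$ may be passed inside the integral.

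First I would recall from Proposition~\ref{pro:Aprime2_reduces_to_Aprime1} that, for each fixed $x \in [0,1]$, the integrand converges pointwise, $\lim_{v_n \searrow \mu_n} P^{(A^{\prime}_2)}(x,\mu_n,r) = P^{(A^{\prime}_1)}(x,\mu_n) = 1 - e^{-x\mu_n}$, this limit being equivalent to $r \to \infty$ through Eq.~\eqref{eq:Aprime2_Relation_rp_muv}. The remaining steps are to justify the interchange of limit and integral and then to identify the resulting integral with $P^{(C_1^{\prime})}$.

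The only technical point---the nominal obstacle, though a very mild one---is this limit interchange. It follows immediately from dominated convergence: the integrands $P^{(A^{\prime}_2)}(x,r,p)$ are probabilities and hence uniformly bounded by $1$, while the dominating function $\rho_h^{(C_1)}(x;\alpha,\beta)$ is integrable on $[0,1]$ because it is a probability density. Passing the limit through therefore yields $\lim_{v_n \searrow \mu_n} P^{(C_2^{\prime})} = \int_0^1 P^{(A^{\prime}_1)}(x,\mu_n)\,\rho_h^{(C_1)}(x;\alpha,\beta)\,\text{d}x = P^{(C_1^{\prime})}$, which is the claim. A fully self-contained alternative would instead start from the explicit representations in terms of ${}_2F_1$ and ${}_1F_1$ and take the confluent limit $r \to \infty$ (with $p \to 1$ and $r(1-p) \to \mu_n$) directly in ${}_2F_1(r,\beta,\alpha+\beta,1-p)\,p^r$, but the measure-theoretic route is cleaner and avoids special-function asymptotics.
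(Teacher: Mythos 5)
Your proof is correct and follows essentially the same route as the paper's: both express $P^{(C_2^{\prime})}$ as the integral of $P^{(A^{\prime}_2)}$ against the beta density (Eq.~\eqref{eq:C2prime_2}) and then invoke Proposition~\ref{pro:Aprime2_reduces_to_Aprime1} for the limit $v_n \searrow \mu_n$. The only difference is that you explicitly justify interchanging the limit and the integral via dominated convergence, a step the paper leaves implicit in the phrase ``easily follows''; this is a welcome tightening rather than a different approach.
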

\begin{proof}
    Expressing the expected probability of infection for model $C_2^\prime$ as
    \begin{equation}
    \label{eq:C2prime_2}
        P^{(C_2^{\prime})}(\alpha,\beta,r,p) = \int_0^1 P^{(A_2')}(x,r,p) \rho^{(C_1)}_h(x;\alpha,\beta)  \text{d}x~,
    \end{equation}
    the corollary easily follows from Proposition \ref{pro:Aprime2_reduces_to_Aprime1}.
\end{proof}

\begin{corollary}
    \label{pro:PCprime2_variance}
    For a fixed mean dose, the expected probability of infection of model $C_2^{\prime}$ is a decreasing function of the dose variance.
\end{corollary}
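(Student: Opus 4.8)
The plan is to exploit the mixture representation of $P^{(C_2')}$ given in Eq.~\eqref{eq:C2prime_2}, which expresses the expected probability of infection as the average of the model-$A_2'$ probability $P^{(A_2')}(x,r,p)$ over the beta-distributed infectivity. The crucial structural observation is that the infectivity PDF $\rho_h^{(C_1)}(x;\alpha,\beta)$ depends only on the infectivity parameters $\alpha$ and $\beta$ and is \emph{independent} of the dose parameters. Consequently, when the mean dose $\mu_n$ is held fixed and the variance $v_n$ is varied, only the integrand $P^{(A_2')}(x,r,p)$ changes, while the weighting density stays put. This is exactly the same device that underpinned Corollary~\ref{cor:Cprime2_reduces_to_Cprime1}, now applied to a derivative rather than a limit.

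First I would differentiate Eq.~\eqref{eq:C2prime_2} with respect to $v_n$ at fixed $\mu_n$, carrying the derivative inside the integral:
\begin{equation}
\frac{\partial P^{(C_2')}}{\partial v_n} = \int_0^1 \frac{\partial P^{(A_2')}(x,r,p)}{\partial v_n}\, \rho_h^{(C_1)}(x;\alpha,\beta)\, \text{d}x~,
\end{equation}
where the $v_n$-dependence on the right-hand side enters solely through the dose parameters $r$ and $p$ via Eq.~\eqref{eq:Aprime2_Relation_rp_muv}. Interchanging differentiation and integration is legitimate because the interval $[0,1]$ is compact and the derivative $\partial P^{(A_2')}/\partial v_n$ is continuous and bounded on it; the only integrable singularities of the beta PDF at the endpoints (when $\alpha<1$ or $\beta<1$) are harmless, since the bounded derivative factor preserves integrability.

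Next I would invoke Proposition~\ref{pro:PAprime2_variance}, which already establishes that for every fixed infectivity $x$ and fixed mean dose, $\partial P^{(A_2')}/\partial v_n < 0$. Since $\rho_h^{(C_1)}(x;\alpha,\beta) \geq 0$ is a probability density, the integrand is non-positive throughout $(0,1)$ and strictly negative on a set of positive measure. Hence the integral is strictly negative, giving $\partial P^{(C_2')}/\partial v_n < 0$ and proving the corollary. In words: monotonicity in $v_n$ holds fiber-by-fiber over $x$ by the earlier proposition, and a non-negatively weighted $x$-average of decreasing functions is itself decreasing.

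I expect the main (and essentially only) obstacle to be the routine justification of differentiation under the integral sign, together with the bookkeeping that $r$ and $p$ are the dose parameters carrying the entire $v_n$-dependence while the beta density carries none. Once these points are in place, the result transfers directly from Proposition~\ref{pro:PAprime2_variance} with no further computation, mirroring how Corollary~\ref{cor:Cprime2_reduces_to_Cprime1} inherited the $v_n \searrow \mu_n$ limit from Proposition~\ref{pro:Aprime2_reduces_to_Aprime1}.
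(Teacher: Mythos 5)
Your proposal is correct and follows essentially the same route as the paper: the paper also proves this corollary by writing $P^{(C_2^{\prime})}$ in the mixture form of Eq.~\eqref{eq:C2prime_2} and transferring the monotonicity in $v_n$ pointwise in $x$ from Proposition~\ref{pro:PAprime2_variance}, noting that the beta density carries no dose dependence. The only difference is presentational: the paper states this as a one-line consequence, whereas you make the pointwise-to-integral step explicit via differentiation under the integral sign, which is a harmless (and valid) elaboration.
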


\begin{proof}
    This is a straightforward consequence of Proposition \ref{pro:PAprime2_variance} when $P^{(C_2^{\prime})}(\alpha,\beta,r,p)$ is expressed as in Eq.~\eqref{eq:C2prime_2}.
\end{proof}

The following result establishes a hierarchy for  the expected probability of infection predicted by model $A$ and all models with heterogeneous infectivity of types $C$ and $C^\prime$. This highlights the fact that increasing heterogeneity in infectivity and dose leads to a decrease of the expected probability of infection.

\begin{proposition}
\label{pro:Inequalities_PA_PC_PCp}
    The following inequalities hold for any infectivity and dose distribution parameters $\mu_x$, $v_x$, $\mu_n$ and $v_n$:
    \begin{equation}
    \label{eq:Inequalities_PA_PC_PCp}
        P^{(A)}(\mu_x,\mu_n) \geq P^{(C_1)}(\mu_x,v_x,\mu_n) \geq
        P^{(C_1^\prime)}(\mu_x,v_x,\mu_n) \geq P^{(C_2^\prime)}(\mu_x,v_x,\mu_n,v_n)~.
    \end{equation} 
\end{proposition}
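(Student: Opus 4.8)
The plan is to prove the three inequalities in \eqref{eq:Inequalities_PA_PC_PCp} one at a time, exploiting the fact that all three heterogeneous-infectivity probabilities are averages of a simpler ``pointwise'' probability against the \emph{same} beta infectivity density $\rho_h^{(C_1)}(x;\alpha,\beta)$, whose parameters are fixed by the common mean $\mu_x$ and variance $v_x$ through Eq.~\eqref{eq:alpha_beta_vs_mux_vx}. Only the leftmost inequality requires genuinely new work; the other two should reduce directly to Proposition~\ref{pro:Inequalities_PA}.

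For the first inequality $P^{(A)}(\mu_x,\mu_n)\geq P^{(C_1)}(\mu_x,v_x,\mu_n)$, I would write $P^{(C_1)}$ via Eq.~\eqref{eq:Pinv_C} as $1-\E_x[(1-x)^{\mu_n}]$ and note that $P^{(A)}(\mu_x,\mu_n)=1-(1-\E_x[x])^{\mu_n}$, so the claim is equivalent to $\E_x[(1-x)^{\mu_n}]\geq (1-\E_x[x])^{\mu_n}$. Since the map $x\mapsto(1-x)^{\mu_n}$ has second derivative $\mu_n(\mu_n-1)(1-x)^{\mu_n-2}\geq 0$ on $[0,1]$ whenever $\mu_n\geq 1$, it is convex, and the inequality is precisely Jensen's inequality \cite{Jensen_ActaMathematica1906_Inequality}. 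The requirement $\mu_n\geq 1$ is exactly the physically meaningful regime guaranteed by assumption (c) of Theorem~\ref{th:GeneralExpectedPinf} (at least one microbe ingested), so this step is self-contained.

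The remaining two inequalities would then follow from Proposition~\ref{pro:Inequalities_PA} combined with monotonicity of the integral. The point is that $P^{(C_1)}$, $P^{(C_1^\prime)}$ and $P^{(C_2^\prime)}$ are integrals of $P^{(A)}(x,\mu_n)$, $P^{(A_1^\prime)}(x,\mu_n)$ and $P^{(A_2^\prime)}(x,\mu_n,v_n)$, respectively, against the same non-negative weight $\rho_h^{(C_1)}(x;\alpha,\beta)$ (cf. Eqs.~\eqref{eq:Pinv_C} and \eqref{eq:PCprime_2}). Proposition~\ref{pro:Inequalities_PA} supplies the pointwise ordering $P^{(A)}(x,\mu_n)\geq P^{(A_1^\prime)}(x,\mu_n)\geq P^{(A_2^\prime)}(x,\mu_n,v_n)$ for every $x$, and integrating against $\rho_h^{(C_1)}\geq 0$ preserves it, yielding both $P^{(C_1)}\geq P^{(C_1^\prime)}$ and $P^{(C_1^\prime)}\geq P^{(C_2^\prime)}$ at once (with equality in the last when $v_n\searrow\mu_n$, consistent with Corollary~\ref{cor:Cprime2_reduces_to_Cprime1}). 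Accordingly, I expect the main obstacle to be merely one of bookkeeping: verifying that $(\mu_x,v_x)$ fix a single $(\alpha,\beta)$ shared by all three integrated models and that the common dose parameters are used consistently, so that Proposition~\ref{pro:Inequalities_PA} can be applied verbatim inside the integrals.
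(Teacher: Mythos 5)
Your proposal is correct, and for the first inequality it coincides with the paper's argument: both apply Jensen's inequality to the concave map $x \mapsto P^{(A)}(x,\mu_n)$ (you phrase it via convexity of $(1-x)^{\mu_n}$, and you are in fact slightly more careful in flagging the requirement $\mu_n \geq 1$, which the paper leaves implicit). Where you genuinely diverge is the middle inequality. The paper applies Jensen a \emph{second} time, now in the dose variable: it asserts that $P^{(C_1)}(\mu_x,v_x,n)$ is concave in $n$ and uses $P^{(C_1)}(\mu_x,v_x,\E_n(n)) \geq \E_n(P^{(C_1)}(\mu_x,v_x,n)) = P^{(C_1^\prime)}(\mu_x,v_x,\mu_n)$ via Eq.~\eqref{eq:PCprime_1}, then handles the third inequality separately through Corollaries~\ref{cor:Cprime2_reduces_to_Cprime1} and \ref{pro:PCprime2_variance}. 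You instead integrate the pointwise ordering $P^{(A)}(x,\mu_n) \geq P^{(A^{\prime}_1)}(x,\mu_n) \geq P^{(A^{\prime}_2)}(x,\mu_n,v_n)$ of Proposition~\ref{pro:Inequalities_PA} against the common beta weight $\rho_h^{(C_1)}(x;\alpha,\beta)$, using Eqs.~\eqref{eq:Pinv_C}, \eqref{eq:PCprime_2} and \eqref{eq:C2prime_2}; this yields the last two inequalities in one stroke. Your route buys two things: it unifies the second and third inequalities (the paper's corollaries are themselves just integrated versions of the $A^\prime$-level results, so no content is lost), and it avoids the concavity-in-$n$ claim, which the paper asserts without proof --- it is true, since $\E_x((1-x)^n)$ is a mixture over $x$ of the functions $e^{n\ln(1-x)}$, each convex in $n$, but verifying it is an extra step your argument simply does not need. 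The bookkeeping you flag at the end --- that $(\mu_x,v_x)$ fix a single pair $(\alpha,\beta)$ through Eq.~\eqref{eq:alpha_beta_vs_mux_vx}, shared by all three integrals, with the dose parameters $(\mu_n,v_n)$ used consistently --- does hold, so the argument goes through as you describe.
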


\begin{proof}
    The first inequality arises from the non-convex nature of $P^{(A)}(x,\mu_n)$ with respect to $x$. Jensen's inequality implies that $P^{(A)}(\E_x(x),\mu_n) \geq \E_x(P^{(A)}(x,\mu_n))$. This equivalently yields $P^{(A)}(\mu_x,\mu_n) \geq P^{(C_1)}(\mu_x,v_x,\mu_n)$ since $\mu_x=\E_x(x)$ and $P^{(C_1)}(\mu_x,v_x,\mu_n)$  can be expressed as $\E_x(P^{(A)}(x,\mu_n))$ through Eq.~\eqref{eq:Pinv_C}.

The second inequality stems from the non-convexity of $P^{(C_1)}(\mu_x,v_x,n)$ with respect to $n$. Applying Jensen's inequality, leads to $P^{(C_1)}(\mu_x,v_x,\E_n(n)) \geq \E_n(P^{(C_1)}(\mu_x,v_x,n))$. This reduces to $P^{(C_1)}(\mu_x,v_x,\mu_n) \geq
    P^{(C_1^\prime)}(\mu_x,v_x,\mu_n)$ using $\mu_n=\E_n(n)$ and $P^{(C_1^\prime)}(\mu_x,v_x,\mu_n) = \E_n(P^{(C_1)}(\mu_x,v_x,n))$ (cf. Eq.~\eqref{eq:PCprime_1}).

    The third inequality follows from the Corollaries \ref{cor:Cprime2_reduces_to_Cprime1} and \ref{pro:PCprime2_variance}.

    Fig.~\ref{P_ModelsA_C1_Cp1_Cp2} illustrates the inequalities given in Eq.~\eqref{eq:Inequalities_PA_PC_PCp} for fixed $v_x$, $\mu_n$ and $v_n$ and variable $\mu_x$.
    
\end{proof}

\begin{figure}%[h]
\centering
\includegraphics[width=8 cm]{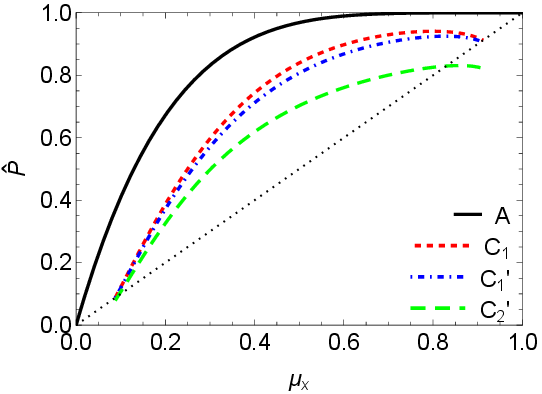}
\caption{Graphical illustration of the inequalities given in Proposition \ref{pro:Inequalities_PA_PC_PCp}. The dependence of the probability of infection on the mean infectivity for a system with homogeneous infectivity (Model $A$, continuous line) is compared to the dependence on $\mu_x$ of the expected probability of infection of models $C_1$ (dashed line), $C^{\prime}_1$ (dot-dashed line) and $C^{\prime}_2$ (long dashed line). The dose was set to $n=5$ for model $A$. The mean and variance of the dose were set to $\mu_n=5$ and $v_n=20$, respectively, for models $C_1$, $C^{\prime}_1$ and $C^{\prime}_2$.}
\label{P_ModelsA_C1_Cp1_Cp2}
\end{figure}

\section{A model for microbial infection with random infectivity}
\label{sec:GrowthModel}
This section provides a biologically explicit interpretation of single-hit dose-response models of type $C$, which account for variations in pathogen infectivity across different hosts. Between-host heterogeneity in infectivity is assumed to arise from biological factors that influence the ability of microbes to establish an infection. In line with Assumption 1 in the introduction, the proposed framework allows for infection to be initiated by a single microbe but specifically considers cases where microbial growth is essential for infection establishment.

Inspired by compartmental models of infection dynamics within a host, it is assumed that infection occurs if the microbial reproduction number within the host, $R$, satisfies $R>1$ \cite{DeLeenheer2003,Li2020,Elbaz2023}. Conversely, if $R<1$, the infection fails to establish. For instance, in a minimal within-host model, microbial dynamics can be described by a simple growth mechanism where microbes infect cells at rate $\beta$ and die at rate $\delta$ ~\cite{Nowak_Book2006,Wodarz2007}. At the early stages of infection, the microbial population $N$ is assumed to obey the equation $\text{d}N/\text{d}t=(\beta-\delta)N$. A necessary condition for infection to occur is $R=\beta/\delta>1$, meaning each microbe produces more than one offspring on average during its average lifetime, $1/\delta$. In contrast, a microbial population dies out when $R<1$. While more complex within-host models incorporate additional parameters, the fundamental infection criterion $R>1$ remains broadly applicable \cite{DeLeenheer2003,Li2020,Elbaz2023}.

\subsubsection{General formulation of the model}

Between-host heterogeneity is modelled by assuming that the basic reproductive ratio $R_{i,h}$ of a microbe $i$ in host $h$ is a random variable with PDF $\rho_R(R_{i,h};\bm\xi_h)$, where $\bm\xi_h$ is a set of host-dependent parameters. The infectivity of a randomly chosen microbe within host $h$ is the probability that  $R_{i,h}>1$:
\begin{equation}
\label{eq:xh_f}
    x_h = f(\bm\xi_h) \equiv \int_1^{\infty} \rho_R(R;\bm\xi_h) \text{d} R~.
\end{equation}

The variability of infectivity between hosts is described by assuming that the parameters $\bm\xi_h$ are random variables with joint PDF $\rho_{\xi}(\bm\xi_h; \bm\xi^\prime)$. Here, $\bm\xi^\prime$ are parameters for the PDF of $\bm\xi_h$. Under this assumption, the probability density function for the infectivity within a randomly chosen host is given by \cite{Riley-Hobson-Bence_MathsBook_3rdEd}:
\begin{equation}
\label{eq:rhox_general}
    \rho_h(x;\bm\xi^\prime)=\int_{\text{supp}(\bm\xi_h)} \text{d}\bm\xi_h \rho_{\xi}(\bm\xi_h; \bm\xi^\prime) \delta(f(\bm\xi_h)-x)~,
\end{equation}
where $\text{supp}(\bm\xi_h)$ is the support of the distribution of $\bm\xi_h$ and  $\delta(\cdot)$ is the Dirac delta function.

In the particular case in which the PDF for $R$ involves a single parameter $\lambda$ (i.e. $\bm\xi_h = \lambda$), Eq.~\eqref{eq:rhox_general} reduces to

\begin{equation}
\label{eq:rhox_general_1D}
 \rho_h(x;\bm\xi^\prime)=\left| \frac{\text{d} f^{-1}(x)}{\text{d} x }\right| \rho_\xi(\lambda=f^{-1}(x);\bm\xi^\prime)~.
\end{equation}

\subsubsection{Example: Linking heterogeneous infectivity and microbial birth and death}
\label{sec:Birth-Death_Particular}

If the basic reproductive ratio is exponentially distributed with rate parameter $\lambda$, Eq.~\eqref{eq:xh_f} yields an infectivity
\begin{equation}
\label{eq:xh_f_particular}
    x = \int_1^{\infty} \lambda e^{-\lambda R} \text{d}R = e^{-\lambda} \equiv f(\lambda)~.
\end{equation}

The parameter $\lambda$  can be interpreted as an effective microbial death rate since the chances for a microbe to grow decrease for increasing $\lambda$.

Since $\lambda>0$, it is reasonable to assume that its value is drawn from a gamma distribution with PDF 
\begin{equation}
\label{eq:rho_xi_particular}
    \rho_{\xi}(\bm\xi_h; \bm\xi^\prime) = \rho_{\xi}(\lambda;\hat{\alpha},\hat{\beta}) = \frac{\hat{\alpha}^{\hat{\beta}}}{\Gamma(\hat{\beta})} \lambda^{\hat{\beta}-1} 
e^{-\hat{\beta} \lambda}~,
\end{equation}
where $\hat{\alpha}$ and $\hat{\beta}$ are the rate and shape parameters, respectively. 

Introducing now Eqs.~\eqref{eq:xh_f_particular} and \eqref{eq:rho_xi_particular} into Eq.~\eqref{eq:rho_xi_particular} leads to the following PDF for the infectivity:
\begin{equation}
\label{eq:rhox_ExpGamma}
    \rho_h(x;\hat{\alpha},\hat{\beta})=\frac{\hat{\alpha}^{\hat{\beta}}}{\Gamma(\hat{\beta}) } (-\ln x)^{\hat{\beta}-1} x^{\hat{\alpha}-1}~.
\end{equation}

This PDF does not seem to correspond to any well-known probability distribution. It will be referred to as the \emph{exponential gamma distribution} (denoted as $x \sim \text{ExpGamma}(\hat{\alpha},\hat{\beta})$) since it corresponds to a random variable $x$ whose exponential is gamma distributed. 

The influence on infectivity of randomness in the effective microbe mortality $\lambda$ can be elucidated by expressing the parameters $\hat{\alpha}$ and $\hat{\beta}$ in terms of the mean and variance of the effective death rate, $\mu_\lambda$ and $v_\lambda$, as follows: 
\begin{align}
\label{eq:s_vs_statskappa}
\hat{\alpha}&=\frac{\mu_{\lambda}}{v_{\lambda}}~\\
\label{eq:theta_vs_statskappa}
\hat{\beta}&=\frac{\mu_{\lambda}^2}{v_{\lambda}}~.
\end{align}

These identities can also be used as approximate relations between the moments of the effective microbe mortality and the parameters of the beta distribution used in models $C$, $C_1^\prime$ and $C_2^\prime$. Indeed, the $\text{ExpGamma}(\hat{\alpha},\hat{\beta})$ distribution reduces to the $\text{Beta}(\alpha,\beta)$ distribution with $\alpha=\hat{\alpha}$ and $\beta=\hat{\beta}$ for $x$ close to 1. This directly follows from the comparison of Eqs.~\eqref{eq:rho_beta} and \eqref{eq:rhox_ExpGamma} for $x \nearrow 1$ which yields $-\ln x \simeq (1-x)$. Fig.~\ref{fig_rhox_4regimes} shows the similarity of the PDFs $\rho_h(x;\hat{\alpha},\hat{\beta})$ and $\rho_h^{(C_1)}(x;\alpha,\beta)$ (Eq.~\eqref{eq:rho_beta}) for examples with $\alpha=\hat{\alpha}$ and $\beta=\hat{\beta}$.

Eq.~\eqref{eq:s_vs_statskappa} shows that $\hat{\alpha}$ is the inverse of the relative variance of $\lambda$ ($\text{RV}_\lambda = v_\lambda/\mu_\lambda$), meaning that it decreases as the microbe mortality becomes more heterogeneous between hosts. This approximately applies to the coefficient $\alpha$ of the beta distribution. 

Eq.~\eqref{eq:theta_vs_statskappa} shows that $\hat{\beta}$ is the square of the inverse of the coefficient of variation of $\lambda$ ($\text{CV}_\lambda = v_\lambda^{1/2}/\mu_\lambda$). Accordingly, $\hat{\beta}$ (and approximately $\beta$) also decreases with increasing heterogeneity in $\lambda$.

Fig.~\ref{fig_rhox_4regimes} graphically demonstrates that the dependence of $\rho_h(x;\hat{\alpha},\hat{\beta})$ on $x$ falls into four different regimes which are analogous to those defined for $\rho_h^{(C_1)}(x;\alpha,\beta)$ in Sec.~\ref{sec:C1}. The regions of each regime in the space $(\mu_\lambda,v_\lambda)$ are defined as follows:

\begin{description}
 \item[Regime (i) - Highly heterogeneous infectivity:] It is observed for any mean effective mortality rate $\mu_\lambda>0$ provided the effective mortality variance is high enough, $v_\lambda > \max \{\mu_\lambda,\mu_\lambda^2 \}$ (see the boundaries marked with continuous and dotted lines in Fig.~\ref{fig_mux_vx_vs_mulambda_vlambda}).
 \item[Regime (ii) - High infectivity with intermediate heterogeneity:] This corresponds to low microbe mortality, $\mu_\lambda <1$, and intermediate effective mortality variance, $\mu_\lambda^2 < v_\lambda < \mu_\lambda$.
 \item[Regime (iii) - Low infectivity with intermediate heterogeneity:] It corresponds to situations with relatively high effective mortality, $\mu_\lambda > 1$ and intermediate variance, $\mu_\lambda < v_\lambda < \mu_\lambda^2$.
 \item[Regime (iv) - Low heterogeneity:] This regime is observed for any $\mu_\lambda>0$, provided the effective mortality variance is below a certain level, i.e. for $v_\lambda < \min \{\mu_\lambda,\mu_\lambda^2 \}$.
\end{description}

\begin{figure}%[h]
\centering
\includegraphics[width=14 cm]{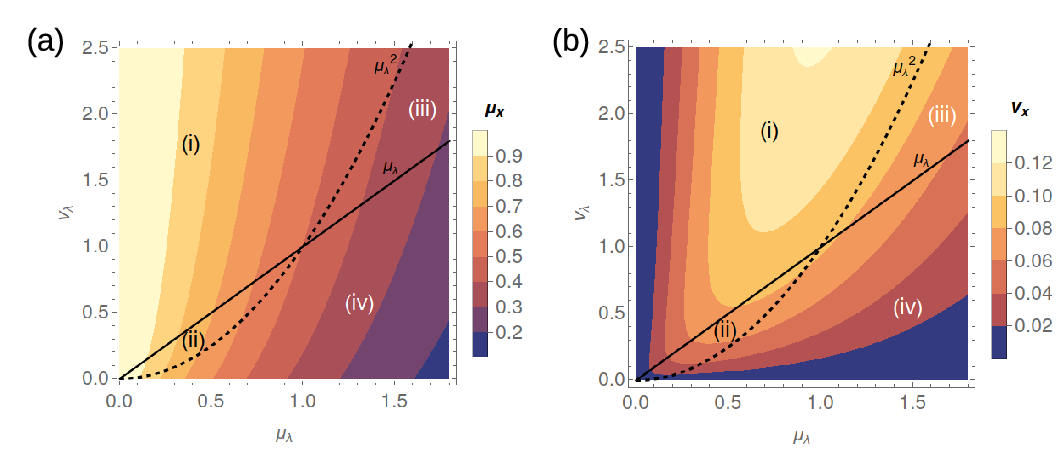}
\caption{Contour plots depict the interdependencies between (a) the mean ($\mu_x$) and (b) the variance ($v_x$) of infectivity, with the mean ($\mu_\lambda$) and variance ($v_\lambda$) of microbial mortality in the growth model proposed in Sec.~\ref{sec:Birth-Death_Particular}. The four regimes for the ExpGamma infectivity PDF ($\rho_h(x;\hat{\alpha},\hat{\beta})$, Eq.~\eqref{eq:rhox_ExpGamma}) are indicated in both panels, together with the regime boundaries corresponding to $v_\lambda = \mu_\lambda$ (continuous line) and $v_\lambda = \mu_\lambda^2$ (dotted line).}
\label{fig_mux_vx_vs_mulambda_vlambda}
\end{figure}

The expectation and variance of the infectivity can be calculated from Eq.~\eqref{eq:rho_xi_particular} and expressed as a function of $\mu_\lambda$ and $v_\lambda$ as follows:
\begin{align}
\label{eq:Ex_LogPower}
\mu_x&=(\frac{\hat{\alpha}}{1+\hat{\alpha}})^{\hat{\beta}}=\left(\frac{\mu_\lambda}{\mu_\lambda+v_\lambda}\right)^{\mu_\lambda^2/v_\lambda}~,\\
\label{eq:Varx_LogPower}
v_x&=(\frac{\hat{\alpha}}{2+\hat{\alpha}})^{\hat{\beta}}-(\frac{\hat{\alpha}}{1+\hat{\alpha}})^{2 \hat{\beta}} =\left(\frac{\mu_\lambda}{\mu_\lambda
+2 v_\lambda}\right)^{\mu_\lambda^2/v_\lambda}- \left(\frac{\mu_\lambda}{\mu_\lambda+v_\lambda} \right)^{2\mu_\lambda^2/v_\lambda}~.
\end{align}

As expected, the mean infectivity of the microbe, $\mu_x$, decreases with $\mu_\lambda$ (see the contour levels in Fig.~\ref{fig_mux_vx_vs_mulambda_vlambda}(a)), i.e. higher mortality of the microbe leads to lower infectivity. Conversely, there is a positive association between $\mu_x$ and $v_\lambda$, indicating that greater variability in microbe mortality tends to enhance infectivity on average. This enhanced infectivity in populations with heterogeneous mortality rates is likely driven by microbes with lower-than-average $\lambda$, which survive longer and are more likely to infect the host than microbes in a homogeneous population with the same mean mortality $\mu_\lambda$.

Regions characterised by low infectivity variance, $v_x$, are evident in regimes (i) and (vi), visible as dark areas in Fig.~\ref{fig_mux_vx_vs_mulambda_vlambda}(b). In these contexts, models featuring homogeneous infectivity (such as type $A$ and $A^\prime$) can offer a reasonable description of the expected probability of infection. In regime (i), the low infectivity variance region corresponds to situations with high mean infectivity, which forces the maximum infectivity variance $v_{\max}(\mu_x)$ to be small (see Eq.~\eqref{eq:vmax}). Conversily, regime (iv) corresponds to systems characterised by relatively homogeneous infectivity, irrespectively of the mean infectivity.

Elevated values of $v_x$ manifest along a ridge situated within regime (i), where neither $\mu_\lambda$ nor $v_\lambda$ are negligible. This observation implies that systems characterised by both substantial expectation and variance in effective mortality are more effectively characterised by models of type $C$ and $C^\prime$, which incorporate variability in infectivity.

\section{Conclusion}
\label{sec:conclusion}
This work has delved into the impact of variation in dose and microbial infectivity on infection probability within the framework of single-hit dose-response models. It has been rigorously shown that heterogeneity in the infectivity of microbial populations ingested by a host enhances the chances of infection (heterogeneity of type I, Theorem~\ref{th:Within-Host-Inf}). This finding should be differentiated from the observation that the expected probability of infection solely depends on the mean infectivity, $\mu_m$ (Sec.~\ref{subsec:ModelB}). In essence, the expected probability represents the average infection likelihood across a group of hosts exposed to doses with heterogeneous infectivity, rather than the infection probability of an individual.

The presence of heterogeneity of type II, where microbial infectivity varies across hosts, has been demonstrated to reduce the expected probability of infection. General results have been given for microbial populations with small infectivity (Theorem~\ref{th:GeneralExpectedPinf}). Moreover, the ubiquitous flattening of the dose-response curve observed in experiments has also been shown to be associated with an increase in the variance of infectivity. The validity of these results has been illustrated with several models which assume random infectivity between hosts. Across these models, the expected probability of infection decreases when increasing the heterogeneity of infectivity regardless of the infectivity expectation, as summarised in Proposition~\ref{pro:Inequalities_PA_PC_PCp} (cf. $P^{(A)}$ and $P^{(C_1)}$).   This suggests that the small infectivity limit required in Theorem~\ref{th:GeneralExpectedPinf} may not be overly restrictive. Investigating a more general theorem which relaxes this condition might be interesting. 

Heterogeneity of type III, associated with variations in dose size among hosts, has similarly been demonstrated to reduce the expected probability of infection.  Theorem~\ref{th:GeneralExpectedPinf} establishes this result in the limit of small infectivity. Specific examples of models show that an increase of the expected probability of infection with the variance of the dose holds for any mean infectivity (summarised in Proposition~\ref{pro:Inequalities_PA}). Once again, this indicates a mild role played by the required small infectivity to probe Theorem~\ref{th:GeneralExpectedPinf}. 

The last results section has proposed a within-host microbial growth model with randomly distributed reproductive ratio within a host. An example of this model has been presented, utilising the effective microbial death rate as a parameter, which essentially acts as the inverse of the reproductive ratio. The model provides interesting predictions: near-uniform infectivity is anticipated when either the mean or variance of the microbial death rate is small. Conversely, systems characterised by substantial mean and variance in the effective death rate will exhibit heterogeneous infectivity. Furthermore, this model offers a biological interpretation for the parameters of the beta-Poisson dose-response model (referred to as Example $C_1^\prime$ within the framework presented here). Specifically, it has been found that both $\alpha$ and $\beta$ decrease with increasing heterogeneity in the effective microbial death rate. In contrast, both $\alpha$ and $\beta$ increase when increasing the expectation of the effective microbial death rate.

Exploring experimental validations to assess the anticipated impacts of diverse forms of heterogeneity is of significant scientific interest, but it also presents considerable challenges. For instance, consider the prediction of Theorem~\ref{th:Within-Host-Inf}, which suggests that the infection probability of a host is minimal for doses with homogeneous infectivity. Strictly speaking, testing this would require a counterfactual approach~\cite{Lash-Rothman_ModernEpidemiology_4thEd}, where the probability of a host being infected by a dose with homogeneous infectivity is compared with the probability of the same host being infected by heterogeneous doses with the same mean infectivity. However, ensuring consistent hosts across different experiments poses challenges, as hosts may change over time or differ between tests. Using very similar hosts might mitigate this issue.

Even if confounding associated with host differences is addressed, significant challenges remain regarding microbial heterogeneity and dose variability. Accurately quantifying and controlling the infectivity of microbial populations is inherently difficult. Furthermore, variations in dose delivery methods, host responses, and environmental factors can introduce additional variability, complicating the interpretation of results.

\section*{Acknowledgements}
The author acknowledges fruitful discussions with Norval Strachan,  Ovidiu Rotariu and Ken Forbes as well as funding support from a Medical Research Council Fellowship (MR/W021455/1).

%% The Appendices part is started with the command \appendix;
%% appendix sections are then done as normal sections
%\appendix

%\section{Sample Appendix Section}
%\label{sec:sample:appendix}

%% If you have bibdatabase file and want bibtex to generate the
%% bibitems, please use
%%
% \bibliographystyle{elsarticle-num} 
 %\bibliography{library}

%% else use the following coding to input the bibitems directly in the
%% TeX file.

%\begin{thebibliography}{00}

% %% \bibitem{label}
% %% Text of bibliographic item

% \bibitem{}

% \end{thebibliography}
\end{document}